\newtheorem{lem}{Lemma}
\newtheorem{ass}{Assumption}
\newtheorem{thm}{Theorem}
\newtheorem{rem}{Remark}
\def\mb{\mathbf}
\def\mbb{\mathbb}
\def\mc{\mathcal}
\DeclareMathOperator*{\argmin}{argmin}
\begin{document}
\title{\huge \bf Distributed support-vector-machine over \\ dynamic balanced directed networks}
\author{Mohammadreza Doostmohammadian, Alireza Aghasi,  Themistoklis Charalambous, and  Usman~A.~Khan
\thanks{MD is with the Faculty of Mechanical Engineering at Semnan University, Semnan, Iran, and with the School of Electrical Engineering at Aalto University, Espoo, Finland, \texttt{mohammadreza.doostmohammadian@aalto.fi}. AA is with the Institute for Insight,  Robinson College of Business at Georgia State University, GA, USA, \texttt{aaghasi@gsu.edu}. TC is with the School of Electrical Engineering at Aalto University,
Espoo, Finland, \texttt{themistoklis.charalambous@aalto.fi}. UAK is with the Electrical and Computer Engineering Department at Tufts University, MA, USA, \texttt{khan@ece.tufts.edu}. The work of UAK is partially supported by NSF under awards~\#1903972 and~\#1935555.}}
\maketitle

\begin{abstract}
	In this paper, we consider the binary classification problem via distributed Support-Vector-Machines (SVM), where the idea is to train a network of agents, with limited share of data, to cooperatively learn the SVM classifier for the global database. Agents only share processed information regarding the classifier parameters and the gradient of the local loss functions instead of their raw data. In contrast to the existing work, we propose a continuous-time algorithm that incorporates network topology changes in discrete jumps. This hybrid nature allows us to remove chattering that arises because of the discretization of the underlying CT process. We show that the proposed algorithm converges to the SVM classifier over time-varying weight balanced directed graphs by using  arguments from the matrix perturbation theory. 
	
\keywords Support Vector Machine, constrained distributed convex optimization, matrix perturbation theory.
\end{abstract}

\section{Introduction} \label{sec_intro}
Machine learning has been an area of significant research in recent signal processing and control literature~\cite{xin2020decentralized,xin2020general,giannakis2020time,TCNS_resource}. Among the topics of interest, Support Vector Machines (SVMs) are supervised-learning methods with several applications ranging from image/video processing to bioinformatics. Motivated by the recent progress in computing hardware and wireless communication, we are interested in developing distributed solutions for SVM classification. The basic idea is to process the raw data at each node in order to train a local classifier and then fuse these classifiers among the neighboring nodes. D-SVM (distributed SVM) finds applications where a subset of the data is acquired by different nodes/servers/agents possibly at different geographic locations, privacy is of concern, and communicating data to a fusion center (FC) is practically infeasible.

In binary classification, SVM defines the maximum-margin hyperplane (as the classifier) determined by the closest data samples known as the Support-Vectors (SVs). The preliminary work on D-SVM (referred as Distributed Parallel SVM (DP-SVM)~\cite{lu2008distributed} and Parallel SVM (P-SVM)~\cite{chang2011psvm}) is focused on local calculations and sharing of the SVs, as the representatives of discriminant information of
the dataset~\cite{navia2006distributed,lu2008distributed,chang2011psvm,bordes2005fast,forero2010consensus}. ${\mbox{In~\cite{navia2006distributed,chang2011psvm,bordes2005fast}}}$, these local SVs are updated via a FC to improve the D-SVM performance. Ref.~\cite{lu2008distributed} implements D-SDV on multi-agent networks and requires a Hamiltonian cycle that visits every agent exactly once. A FC-free approach is considered in~\cite{forero2010consensus}, where every agent locally solves a (coupled) convex optimization sub-problem via alternating direction method of multipliers (ADMM), which is not computationally efficient. A main drawback is that these approaches require sharing raw data over the communication network, raising data privacy and information security issues. More recently, consensus-based distributed optimization methods are proposed in~\cite{gharesifard2013distributed,ning2017distributed,garg2019fixed2,rahili_ren,taes2020finite,li2020time,armand2017globally,srivastava2018distributed,mansoori2019fast}, where instead of raw data, agents share \textit{processed} information, which in case of leakage to unauthorized parties reveals little information about the original data. Among these, the solution in~\cite{armand2017globally} requires distributed computation of the Hessian inverse, which is not practical since even for a sparse Hessian matrix, its inverse is not necessarily sparse. Penalty-based approaches are proposed in~\cite{srivastava2018distributed,mansoori2019fast}, where the constrained convex cost function is reformulated by adding a new penalty term on consensus constraint violation. It is shown that there is a gap of~$\mc{O}(\kappa)$ (with~$\kappa$ as the penalty constant) between the optimal penalty-based solution and the original constrained one~\cite{yuan2016convergence}.
Other methods include finite/fixed-time algorithms~\cite{ning2017distributed,garg2019fixed2,rahili_ren,taes2020finite,li2020time} that are prone to
steady-state oscillations (known as \textit{chattering}) due to non-Lipschitz dynamics. 
%In terms of data privacy, these solutions~\cite{gharesifard2013distributed,ning2017distributed,garg2019fixed2,rahili_ren,taes2020finite,armand2017globally,srivastava2018distributed,mansoori2019fast} outperform the classical ones~\cite{navia2006distributed,lu2008distributed} since no raw data (including the SVs) are shared over the network. 

In this paper, a D-SVM method is proposed that overcomes the challenges of (semi-centralized) FC-based solution and the chattering phenomena. Moreover, in contrast to Refs.~\cite{gharesifard2013distributed,ning2017distributed,garg2019fixed2,rahili_ren,taes2020finite,li2020time,armand2017globally,srivastava2018distributed,mansoori2019fast}, where either continuous-time (CT) or discrete-time (DT) protocols are considered, we propose a hybrid algorithm to address the topology switching of the multi-agent network in DT incorporated in a CT gradient-descent update~\cite{goebel2009hybrid}. Our hybrid approach enables more flexibility in considering mixed-dynamics\cite{ly2012learning,goebel2009hybrid}, which allows solving D-SVM via CT protocols over general \textit{dynamic} digraphs in DT domain. To analyze the proposed hybrid model, we use \textit{matrix perturbation theory}~\cite{stewart1990matrix} to characterize the eigenspectrum of the proposed dynamics. The proposed D-SVM is fully distributed, as opposed to FC-based approaches, and does not require solving convex sub-problems unlike~\cite{navia2006distributed,lu2008distributed,chang2011psvm,bordes2005fast,forero2010consensus}. Due to Lipschitz-continuity of the proposed CT approach, it's DT approximation is free of the aforementioned \textit{chattering} inherent to the non-Lipschitz dynamics~\cite{ning2017distributed,garg2019fixed2,rahili_ren,taes2020finite,li2020time}. Note that we directly solve the original constrained optimization  free of penalty-based approximation inaccuracies in~\cite{srivastava2018distributed,mansoori2019fast,yuan2016convergence}.

% To address D-SVM, we first solve the \textit{consensus-constrained} distributed optimization over general  \textit{weight-balanced directed graphs} (WB-digraphs). As compared to constrained distributed optimization over \textit{static} WB-digraphs~\cite{gharesifard2013distributed} and undirected networks~\cite{ning2017distributed,garg2019fixed2,rahili_ren,taes2020finite,armand2017globally,srivastava2018distributed,mansoori2019fast}, this work extends the literature on \textit{unconstrained} distributed optimization and network resource allocation over WB-digraphs~\cite{liang2018distributed,qu2017harnessing} to the consensus-constrained case over \textit{dynamic} WB-digraphs. Our proposed continuous-time/discrete-time (CT/DT) dynamics is modeled  as a hybrid dynamical system to address the dynamic nature of both the multi-agent network and the gradient-descent approach. This work, as in many recent works~\cite{gharesifard2013distributed,ning2017distributed,garg2019fixed2,rahili_ren,taes2020finite,armand2017globally,srivastava2018distributed,mansoori2019fast}, requires the cost function to be (at least) strictly convex and continuously twice-differentiable. 
%Our method is fully distributed as every agent only has access to its own local data-set and the shared estimates of the classifier parameters and local gradients sent from its direct incoming neighbors. Moreover, our solution is free of the so-called chattering inherent to DT approximation of non-Lipschitz dynamics~\cite{ning2017distributed,garg2019fixed2,rahili_ren,taes2020finite}.

We now describe the rest of the paper. Section~\ref{sec_pre} recaps some preliminaries on algebraic graph theory while Section~\ref{sec_prob} formulates the D-SVM problem. Section~\ref{sec_dyn} states our CT gradient descent method to address D-SVM whereas the convergence analysis over dynamic WB-digraphs is available in Section~\ref{sec_conv}. Section~\ref{sec_sim} provides an illustrative example, and finally, Section~\ref{sec_conclusion} concludes the paper with some future research directions.   

\section{Preliminaries on Algebraic Graph Theory} \label{sec_pre}
We represent the multi-agent network by a strongly-connected directed graph (SC digraph)~$\mc{G}$.
Assuming a positive weight~$w_{ij}$ for every link (from node~$j$ to node~$i$) and zero otherwise,  the irreducible adjacency matrix of~$\mc G$ is $W=\{w_{ij}\}$, and the Laplacian matrix~$\overline{W}=\{\overline{w}_{ij}\}$ is defined as, 
\begin{eqnarray} \label{eq_laplac}
      \overline{w}_{ij}=\left \{ \begin{array}{ll} w_{ij} & ,~ i\neq j, \\ -\sum_{j=1}^n w_{ij} & ,~ i=j. \end{array}\right.
\end{eqnarray}
The SC property of the graph is directly related to the rank of its Laplacian matrix as given in the next lemma.
\begin{lem} \label{lem_sc}
	\cite{SensNets:Olfati04} The given Laplacian $\overline{W}$ in \eqref{eq_laplac} for a SC digraph
	has  eigenvalues whose real-parts are non-positive with one isolated eigenvalue at zero.
\end{lem}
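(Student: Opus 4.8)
The plan is to split the claim into two parts: the bound $\mathrm{Re}(\lambda)\le 0$ on every eigenvalue, which follows from the Ger\v{s}gorin disc theorem, and the assertion that $0$ is a simple, isolated eigenvalue, which is where the strong-connectivity hypothesis must be used via Perron--Frobenius theory. As a preliminary observation, note that each row of $\overline{W}$ sums to zero by construction (the diagonal entry $-\sum_{j} w_{ij}$ exactly cancels the off-diagonal entries $w_{ij}$, $j\neq i$), so $\overline{W}\mb{1}=\mb{0}$ and hence $0$ is an eigenvalue of $\overline{W}$ with right eigenvector $\mb{1}$.

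For the location of the spectrum I would invoke Ger\v{s}gorin's theorem. The $i$-th disc is centered at $\overline{w}_{ii}=-\sum_{j\neq i}w_{ij}\le 0$ and has radius $\sum_{j\neq i}|\overline{w}_{ij}|=\sum_{j\neq i}w_{ij}=|\overline{w}_{ii}|$. Thus each disc is contained in the set $\{z:\ |z-\overline{w}_{ii}|\le|\overline{w}_{ii}|\}$, a disc lying in the closed left half-plane whose only boundary point on the imaginary axis is the origin. Since the spectrum of $\overline{W}$ is contained in the union of these discs, every eigenvalue $\lambda$ satisfies $\mathrm{Re}(\lambda)\le 0$.

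The substantive part is showing that $0$ is simple and separated from the rest of the spectrum. Here I would write $\overline{W}=W-D$ with $D=\mathrm{diag}(d_1,\dots,d_n)$, $d_i=\sum_j w_{ij}$, pick $d_{\max}=\max_i d_i$, and consider $A:=\overline{W}+d_{\max}I=W+(d_{\max}I-D)$, which is entrywise nonnegative and, because $\mc{G}$ is strongly connected, irreducible (its off-diagonal sparsity pattern is that of $W$). Every row of $A$ sums to $d_{\max}$, so from $\rho(A)\le\max_i\sum_j A_{ij}=d_{\max}$ and $A\mb{1}=d_{\max}\mb{1}$ we get $\rho(A)=d_{\max}$. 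By the Perron--Frobenius theorem for irreducible nonnegative matrices, $\rho(A)=d_{\max}$ is an algebraically simple eigenvalue of $A$, hence $0$ is an algebraically simple eigenvalue of $\overline{W}$. To see it is the \emph{only} eigenvalue on the imaginary axis, observe that an eigenvalue $\mu$ of $A$ with $\mathrm{Re}(\mu)=d_{\max}$ lies in the closed disc $\{z:\ |z|\le\rho(A)=d_{\max}\}$ with real part equal to the radius, forcing $\mu=d_{\max}$; translating back, $\lambda=0$ is the unique eigenvalue of $\overline{W}$ with zero real part, and every other eigenvalue lies strictly in the open left half-plane, so $0$ is isolated.

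The step I expect to be the main obstacle is this last one: Ger\v{s}gorin alone gives neither algebraic simplicity of $0$ nor its separation from the remaining spectrum, so the strong-connectivity/irreducibility hypothesis and Perron--Frobenius are genuinely needed. A minor point to be careful about is the row-sum (rather than column-sum) convention in~\eqref{eq_laplac}: the relevant Perron eigenvector of $A$ is the right eigenvector $\mb{1}$, and no weight-balanced assumption is required for this lemma---only that $\mc{G}$ is strongly connected.
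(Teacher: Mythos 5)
The paper offers no proof of Lemma~\ref{lem_sc}; it is imported verbatim from \cite{SensNets:Olfati04}, so there is nothing in the source to compare against line by line. Your argument is correct and is the standard route (essentially the one in the cited reference): the Ger\v{s}gorin discs are centered at $-d_i$ with radius $d_i$, hence lie in the closed left half-plane and meet the imaginary axis only at the origin; the shift $A=\overline{W}+d_{\max}I$ is nonnegative, irreducible by strong connectivity, and has constant row sum $d_{\max}$, so $\rho(A)=d_{\max}$ is algebraically simple by Perron--Frobenius, making $0$ a simple eigenvalue of $\overline{W}$; and the observation that $|\mu|\le d_{\max}$ together with $\mathrm{Re}(\mu)=d_{\max}$ forces $\mu=d_{\max}$ correctly shows $0$ is the only eigenvalue on the imaginary axis, i.e., isolated from the rest of the spectrum. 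Your closing remarks are also accurate: weight balance plays no role in this lemma, and the row-sum convention of~\eqref{eq_laplac} is exactly what makes $\mathbf{1}$ a right eigenvector.
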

Next, we define a WB-digraph as an SC digraph with equal weight-sum of incoming and outgoing links at every node $i$, i.e.,~$\sum_{j=1}^n w_{ji} = \sum_{j=1}^n w_{ij}$,
implying the following lemma.  
\begin{lem} \label{lem_laplacian}
	\cite{SensNets:Olfati04} For the Laplacian~$\overline{W}$ of a WB-digraph, the vectors~$\mb{1}_n^\top$ and~$\mb{1}_n$ are respectively the left and right eigenvector associated with the zero eigenvalue, i.e.,~$\mb{1}_n^\top \overline{W}= \mb{0}_n$ and~$\overline{W}\mb{1}_n=\mb{0}_n$, where~$\mb{1}_n$ and~$\mb{0}_n$ are the column vectors of~$1$'s and~$0$'s of size~$n$, respectively.
\end{lem}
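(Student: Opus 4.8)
The plan is to prove both identities by a direct entrywise computation from the definition of the Laplacian in \eqref{eq_laplac}, and then to invoke Lemma~\ref{lem_sc} to conclude that the vectors exhibited are precisely the (one-dimensional) eigenspaces of the simple zero eigenvalue. No deep machinery is needed; the argument is essentially a row-sum/column-sum bookkeeping exercise, and the only place the weight-balance hypothesis is actually used is the left-eigenvector claim.

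First I would verify $\overline{W}\mathbf{1}_n=\mathbf{0}_n$. The $i$-th entry of $\overline{W}\mathbf{1}_n$ is the $i$-th row sum $\sum_{j=1}^n \overline{w}_{ij}$. Splitting off the diagonal term and substituting $\overline{w}_{ij}=w_{ij}$ for $j\neq i$ together with $\overline{w}_{ii}=-\sum_{j=1}^n w_{ij}$ (and using that there are no self-loops, so $w_{ii}=0$), each row sum equals $\sum_{j\neq i} w_{ij}-\sum_{j\neq i} w_{ij}=0$. Note this step relies only on the construction of the Laplacian and holds for any digraph, weight-balanced or not.

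Next I would verify $\mathbf{1}_n^\top \overline{W}=\mathbf{0}_n^\top$, which is where the WB property enters. The $j$-th entry of $\mathbf{1}_n^\top \overline{W}$ is the $j$-th column sum $\sum_{i=1}^n \overline{w}_{ij}=\sum_{i\neq j} w_{ij}-\sum_{k=1}^n w_{jk}$. The first term $\sum_{i\neq j} w_{ij}=\sum_{l=1}^n w_{lj}$ is the total incoming weight at node~$j$, and the second term $\sum_{k=1}^n w_{jk}=\sum_{l=1}^n w_{jl}$ is the total outgoing weight at node~$j$; these are equal exactly by the defining identity $\sum_{l=1}^n w_{lj}=\sum_{l=1}^n w_{jl}$ of a WB-digraph, so the column sum vanishes and the claim follows.

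Finally, combining these computations with Lemma~\ref{lem_sc}, which guarantees that an SC digraph Laplacian has a single isolated eigenvalue at zero, the nonzero vectors $\mathbf{1}_n$ and $\mathbf{1}_n^\top$ found above span the corresponding right and left eigenspaces, which completes the proof. I do not anticipate a genuine obstacle here; the one point to handle carefully is keeping the incoming-versus-outgoing index conventions straight, so that the WB identity is applied at the correct node when summing down a column rather than across a row.
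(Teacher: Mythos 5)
Your proof is correct. The paper itself gives no proof of this lemma --- it is simply cited from \cite{SensNets:Olfati04} --- and your direct row-sum/column-sum computation is exactly the standard argument one would supply: $\overline{W}\mathbf{1}_n=\mathbf{0}_n$ follows from the construction of the Laplacian alone (given $w_{ii}=0$, which you rightly flag), while $\mathbf{1}_n^\top\overline{W}=\mathbf{0}_n^\top$ uses precisely the weight-balance identity $\sum_{l}w_{lj}=\sum_{l}w_{jl}$. One cosmetic remark on the point you yourself worried about: under this paper's convention $w_{ij}$ is the weight of the link \emph{from $j$ to $i$}, so the column sum $\sum_{i\neq j}w_{ij}$ is the \emph{outgoing} weight at node $j$ and the row sum $\sum_k w_{jk}$ the \emph{incoming} weight --- your labels are swapped relative to the paper, but the identity you invoke is the WB condition either way, so nothing in the argument breaks.
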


In the rest of the paper,~${\lVert A\rVert_{\infty}}$ denotes the infinity norm of a matrix, i.e.,~$\lVert A\rVert_{\infty}= \max_{1\leq i\leq n} \sum_{j=1}^n |a_{ij}|.$

\section{Problem Statement} \label{sec_prob}
Consider a binary classification problem for a given set of~$N$ data points~$\boldsymbol{\chi}_i \in \mathbb{R}^{m-1}$,~$i=1,\ldots,N$, each belonging to one of two classes labeled by~${l_i \in \{-1,1\}}$. Using the entire training set, the SVM problem is to find a hyperplane~${\boldsymbol{\omega}^\top \boldsymbol{\chi} - \nu =0}$,  for~${\boldsymbol{\chi}\in\mbb R^{m-1}}$, based on the maximum margin linear classification to partition the data into two classes. Subsequently, a new test data point~$\widehat{\boldsymbol{\chi}}$ belongs to the class labeled  as~${g(\widehat{\boldsymbol{\chi}})= \text{sgn}(\boldsymbol{\omega}^\top \widehat{\boldsymbol{\chi}} - \nu)}$.  In case the data points are not linearly separable, the input data is first projected into a high-dimensional space~$\mc{F}$ via a nonlinear mapping~$\phi(\cdot)$. This mapping is such that the inner products of two projected data points can be computed via a \textit{kernel function}~$K(\cdot)$, i.e.,~$K(\boldsymbol{\chi}_i,\boldsymbol{\chi}_j)=\phi(\boldsymbol{\chi}_i)^\top \phi(\boldsymbol{\chi}_j)$. By proper selection of~$\phi(\cdot)$, a linear optimal  hyperplane defined by~$\boldsymbol{\omega}$ and~$\nu$ can be found in~$\mc{F}$ such that~${g(\widehat{\boldsymbol{\chi}})= \text{sgn}(\boldsymbol{\omega}^\top \phi(\widehat{\boldsymbol{\chi}}) - \nu)}$ determines the class of~$\widehat{\boldsymbol{\chi}}$. In centralized SVM, all the data points are sent to a central computation entity (the FC) that finds the optimal~$\boldsymbol{\omega}~$ and~$\nu$ by minimizing the following  convex loss function~\cite{chapelle2007training}:
\begin{equation} \label{eq_svm_cent}
	\begin{aligned}
		\displaystyle
		& \min_{\boldsymbol{\omega},\nu}
		~ &  \boldsymbol{\omega}^\top \boldsymbol{\omega} + C \sum_{j=1}^{N} \max\{1-l_j( \boldsymbol{\omega}^\top \phi(\boldsymbol{\chi}_j)+\nu),0\}^q
	\end{aligned}
\end{equation}
where~${q = \{1,2,\ldots\}}$ defines smoothness of the loss function and its derivatives, and the positive constant~$C$ determines the trade-off between increasing the margin size and ensuring that the projected data ~$\phi(\boldsymbol{\chi}_j)$ lies on  correct side of the hyperplane. We note that the SVM loss function~\eqref{eq_svm_cent} is not continuously twice-differentiable for~$q=\{1,2\}$. Therefore, it is common to approximate~$\max\{z,0\}^q$ for~${q=1}$ by the smooth function~${L(z,\mu)=\frac{1}{\mu}\log (1+\exp(\mu z))}$, which is the integral of the well-known \textit{sigmoid function}~\cite{garg2019fixed2}. It can be shown that by setting~$\mu$ large enough~${L(z,\mu)}$ becomes arbitrarily close to~$\max\{z,0\}$; see~\cite{slp_book} for more smooth loss functions, e.g., for logistic regression with cross-entropy loss.

In \textit{distributed} SVM (D-SVM), the data points are available over a network of~$n$ agents and each agent~$i$ possesses a local dataset with~$N_i$ data points denoted  by~$\boldsymbol{\chi}^i_j,{j=1,\ldots,N_i}$.  Since each agent has access to partial data, the locally found values~$\boldsymbol{\omega}_i$ and~$\nu_i$, obtained by solving~\eqref{eq_svm_cent} over the local dataset~$\boldsymbol{\chi}^i_j,{j=1,\ldots,N_i}$,
%$\{\boldsymbol{\chi}^i_j\}_i$, 
may differ for each agent~$i$. The idea behind D-SVM is thus to develop a distributed mechanism to learn the global classifier parameters by making sure that no agent reveals its local data to any other agent. The corresponding distributed optimization  problem is given by: 
\begin{equation} \label{eq_svm_dist}
	\begin{aligned}
		\displaystyle
		 \min_{\boldsymbol{\omega}_1,\nu_1,\ldots,\boldsymbol{\omega}_n,\nu_n}
		\quad &  \sum_{i=1}^{n} f_i(\boldsymbol{\omega}_i,\nu_i) \\
		 \text{subject to} \quad&  \boldsymbol{\omega}_1 = \dots = \boldsymbol{\omega}_n,\qquad\nu_1 = \dots =\nu_n,
	\end{aligned}
\end{equation}
where each local cost~${f_i:\mathbb{R}^m\rightarrow\mbb R}$ is approximated as (with~${z=1-l_j( \boldsymbol{\omega}_i^\top \phi(\boldsymbol{\chi}^i_j)+\nu_i)}$ and large enough~$\mu>0$)
\begin{equation*}
f_i(\boldsymbol{\omega}_i,\nu_i)=\boldsymbol{\omega}_i^\top \boldsymbol{\omega}_i + C \sum_{j=1}^{N_i} \tfrac{1}{\mu}\log (1+\exp(\mu z)).
\end{equation*}
% \begin{equation*}
% f_i(\boldsymbol{\omega}_i,\nu_i)=\boldsymbol{\omega}_i^\top \boldsymbol{\omega}_i + C \sum_{j=1}^{N_i} \max\{1-l_j( \boldsymbol{\omega}_i^\top \phi(\boldsymbol{\chi}^i_j)+\nu_i),0\}^q.
% \end{equation*}
%\begin{rem}\label{rem_strongly convex}
%	The functions~$f_i(\boldsymbol{\omega}_i,\nu_i)$ are strongly convex. 
%\end{rem}
%Problem~\eqref{eq_svm_dist}-\eqref{eq_fi} is a constrained distributed optimization problem. '
Let~${\mb{x}_i = [\boldsymbol{\omega}_i^\top;\nu_i]}\in\mathbb{R}^m$ and let~$\mb{x} \in \mathbb{R}^{mn}$ be the global vector concatenating all~$\mb{x}_i$'s, i.e.,
${\mb{x} = [\mb{x}_1;\mb{x}_2;\dots;\mb{x}_n]}$, where the symbol~`$;$' denotes the column concatenation of the~$\mb{x}_i$ vectors. Then, Problem~\eqref{eq_svm_dist} takes the following form:   
\begin{align}\nonumber
\min_{\mb{x} \in \mathbb{R}^{mn}}
F(\mb{x}),\qquad F(\mb x) = \sum_{i=1}^{n} f_i(\mb{x}_i)\\\label{eq_prob} \text{subject to} ~ \mb{x}_1 = \mb{x}_2 = \dots = \mb{x}_n.
\end{align}
We next provide the following lemma on the local costs. 
% \begin{ass} \label{ass_f(x)}
% 	The local loss functions~$f_i$'s, at all agents, are twice differentiable and strictly convex, i.e., the~$m\times m$ Hessian matrix~$\boldsymbol{ \nabla}^2 f_i(\mb{x}_i)$ is positive definite, for all~$\mb x_i\in\mathbb R^m$.
% \end{ass} 
\begin{lem} \label{lem_f(x)}
\cite{garg2019fixed2} Each local cost~$f_i$ is twice differentiable and strictly convex, i.e., the~$m\times m$ Hessian matrix~$\boldsymbol{ \nabla}^2 f_i(\mb{x}_i)$ is positive definite, for all non-zero~$\mb x_i\in\mathbb R^m$.
\end{lem}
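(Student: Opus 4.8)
The plan is to write the Hessian of $f_i$ in closed form and read off positive definiteness directly. First I would settle the regularity: with $z_j(\mb{x}_i)=1-l_j(\boldsymbol{\omega}_i^\top\phi(\boldsymbol{\chi}^i_j)+\nu_i)$ affine in $\mb{x}_i=[\boldsymbol{\omega}_i;\nu_i]$ and the scalar map $t\mapsto\tfrac1\mu\log(1+\exp(\mu t))$ being $C^\infty$ on $\mbb{R}$, every summand $\tfrac C\mu\log(1+\exp(\mu z_j))$ is $C^\infty$; adding the quadratic $\boldsymbol{\omega}_i^\top\boldsymbol{\omega}_i$ keeps $f_i$ in $C^\infty$, so in particular it is twice continuously differentiable.

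Next I would differentiate via the chain rule. Writing $\sigma(s)=\exp(s)/(1+\exp(s))$, so that $\tfrac{d}{ds}\log(1+\exp(s))=\sigma(s)$ and $\sigma'(s)=\sigma(s)(1-\sigma(s))$, and using $\nabla_{\mb{x}_i}z_j=-l_j[\phi(\boldsymbol{\chi}^i_j);1]$ together with $l_j^2=1$, one obtains
\begin{equation*}
\boldsymbol{\nabla}^2 f_i(\mb{x}_i)=\begin{bmatrix}2I_{m-1}&\mb{0}\\\mb{0}^\top&0\end{bmatrix}+C\sum_{j=1}^{N_i}s_j(\mb{x}_i)\begin{bmatrix}\phi(\boldsymbol{\chi}^i_j)\\1\end{bmatrix}\begin{bmatrix}\phi(\boldsymbol{\chi}^i_j)\\1\end{bmatrix}^\top,
\end{equation*}
with $s_j(\mb{x}_i)=\sigma(\mu z_j(\mb{x}_i))\big(1-\sigma(\mu z_j(\mb{x}_i))\big)$. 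The key point to highlight is that $\sigma$ is valued strictly in $(0,1)$, hence $s_j(\mb{x}_i)>0$ for every index $j$ and every $\mb{x}_i$.

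Then I would verify positive definiteness by testing against an arbitrary $v=[a;b]\neq\mb{0}$ with $a\in\mbb{R}^{m-1}$, $b\in\mbb{R}$. Since each rank-one term and the leading block are positive semidefinite we have $\boldsymbol{\nabla}^2 f_i(\mb{x}_i)\succeq0$, and moreover
\begin{equation*}
v^\top\boldsymbol{\nabla}^2 f_i(\mb{x}_i)\,v=2\lVert a\rVert^2+C\sum_{j=1}^{N_i}s_j(\mb{x}_i)\big(a^\top\phi(\boldsymbol{\chi}^i_j)+b\big)^2 .
\end{equation*}
If $a\neq\mb{0}$ the first term is already positive; if $a=\mb{0}$ and $b\neq0$, then each $(a^\top\phi(\boldsymbol{\chi}^i_j)+b)^2=b^2>0$ and $s_j(\mb{x}_i)>0$, so the sum is positive provided agent $i$ holds at least one sample ($N_i\geq1$). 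Hence $v^\top\boldsymbol{\nabla}^2 f_i(\mb{x}_i)v>0$ for all $v\neq\mb{0}$, i.e., the Hessian is positive definite and $f_i$ is strictly convex, recovering the claim of \cite{garg2019fixed2}.

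The one place that needs care — and the main obstacle — is that the quadratic penalty $\boldsymbol{\omega}_i^\top\boldsymbol{\omega}_i$ is degenerate along the intercept coordinate $\nu_i$, so strict convexity does \emph{not} follow from the naive ``convex plus strictly convex quadratic'' reasoning; one genuinely has to exploit the strict positivity $s_j>0$ of the logistic curvature, together with $N_i\geq1$, to supply the missing curvature in the $\nu_i$-direction. Everything else is a routine chain-rule computation.
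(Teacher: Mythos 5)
Your argument is correct, and it is worth noting that the paper itself offers no proof of this lemma at all---it simply cites \cite{garg2019fixed2}---so there is nothing to compare against beyond checking your reasoning on its own terms. The reasoning is sound: smoothness follows from composing the $C^\infty$ softplus with an affine map, the rank-one decomposition of the Hessian is the right closed form, and you correctly identify the genuine crux, namely that $\boldsymbol{\omega}_i^\top\boldsymbol{\omega}_i$ contributes no curvature in the $\nu_i$-direction, so strict convexity must come from the strictly positive logistic curvature $s_j>0$ of at least one data term (hence the---implicit in the paper---requirement $N_i\geq 1$). One small slip: differentiating $g(z)=\tfrac1\mu\log(1+\exp(\mu z))$ twice gives $g''(z)=\mu\,\sigma(\mu z)\bigl(1-\sigma(\mu z)\bigr)$, so your $s_j$ is missing a factor of $\mu$; since $\mu>0$ this changes nothing in the positivity argument, but the closed form as written is off by that constant. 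With that correction the proof is complete and, if anything, slightly stronger than the stated lemma, since you establish positive definiteness of the Hessian at \emph{every} $\mb{x}_i$, not only nonzero ones.
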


Clearly, any solution~$\mb x_i^*,{i=1,\ldots,n}$, of~\eqref{eq_prob} must satisfy ${\sum_{i=1}^{n} \boldsymbol{ \nabla} f_i(\mb{x}^*_i) = \mb{0}_m}$, such that~${\mb{x}^*_1=\ldots=\mb{x}^*_n=\overline{\mb{x}}^*}$, for some~${\overline{\mb{x}}^*\in\mathbb{R}^m}$. In other words, the optimality condition~${\boldsymbol{ \nabla} F(\mb{x}^*) = \mb{0}_{mn}}$ must hold for some~${\mathbf x^*\in\mathbb{R}^{mn}}$ such that~${\mb{x}^*=\mb{1}_n \otimes \overline{ \mb{x}}^*}$, where ${\boldsymbol{\nabla} F:\mathbb R^{mn}\rightarrow\mathbb R^{mn}}$ is the gradient of~${F:\mbb R^{mn}\rightarrow\mbb R}$.

\section{Proposed Algorithm:\\ Dynamics and Auxiliary Results}\label{sec_dyn}

We now provide a distributed solution to  the D-SVM problem. 
Let~$\mb{x}_i(t)\in\mathbb R^m$ to be the state of agent~$i$ at time~$t$, where~$t\geq 0$ is the continuous-time variable. To solve problem~\eqref{eq_prob}, we consider the following continuous-time linear dynamics for all~$\mb{x}_i(t) \in \mathbb{R}^{m},i\in\{1,\ldots,n\}$, 
\begin{eqnarray} \label{eq_xdot}	 
	\dot{\mb{x}}_i = -\sum_{j=1}^{n} w_{ij}(\mb{x}_i-\mb{x}_j)-\alpha \mb{y}_i, 
\end{eqnarray}
where~${\dot{\mb{x}}_i=\frac{d{\mb{x}}_i}{dt}}$,~${W=\{w_{ij}\}}$ is the weighted adjacency matrix associated with ~$\mc{G}$, and~$\alpha>0$ is the stepsize.  We note that instead of the standard descend direction of~$\nabla f_i(\mathbf{x}_i)$, the~$\mathbf{x}_i$-update in~\eqref{eq_xdot}
descends in the direction of an auxiliary variable~$\mb y_i(t)\in\mathbb{R}^m$. The variable~$\mb y_i(t)$ in fact tracks the sum of local gradients, asymptotically, and is updated via the following dynamics (see~\cite{xin2020decentralized,xin2020general,giannakis2020time} for similar DT methods):
\begin{eqnarray} \label{eq_ydot}	
	\dot{\mb{y}}_i &=& -\sum_{j=1}^{n} a_{ij}(\mb{y}_i-\mb{y}_j) + \frac{d}{d t} \boldsymbol{ \nabla} f_i(\mb{x}_i),
\end{eqnarray}
where~$\dot{\mb{y}}_i=\frac{d{\mb{y}}_i}{dt}$ and the matrix~$A=\{a_{ij}\}$ is the weighted adjacency matrix with the same structure as~$W$. In~\eqref{eq_ydot},
\begin{eqnarray} \label{eq_dtdf}	
	\frac{d}{d t} \boldsymbol{ \nabla} f_i(\mb{x}_i)=  \boldsymbol{ \nabla}^2 f_i(\mb{x}_i)  \dot{\mb{x}}_i.
\end{eqnarray} 
Note that the proposed algorithm, \eqref{eq_xdot} and \eqref{eq_ydot}, is in continuous-time. However, the structure of the underlying graph~$\mathcal G$ may change in time instances, that we consider as time steps in a discrete-time framework. This makes the proposed dynamics \textit{hybrid} where the state variables,~$\mathbf x$ and~$\mathbf y$, evolve in CT over DT switching of the network topology.
For the ease of notation, we define an auxiliary global variable~${\mb{y}=[\mb{y}_1;\mb{y}_2;\dots;\mb{y}_n]\in\mbb{R}^{mn}}$ that concatenates the local~$\mb{y}_i(t)$'s. %As introduced in Section~\ref{sec_pre}, the associated Laplacian matrices for~$W$ and~$A$, are denoted by~$\overline{W}$ and~$\overline{A}$, respectively, satisfying the following assumption. 
We make the following assumption on the weight matrices $W$ and $A$.
%; recall Section~\ref{zz}.
\begin{ass} \label{ass_Wg}
		The weights~${W=\{w_{ij}\}}$ and~${A=\{a_{ij}\}}$ are associated to a WB-digraph with~${w_{ij}\geq0}$ and~${a_{ij}\geq0}$, respectively. 
		%for~$i,j \in \{1,\ldots,n\}$.
		Further,~$\sum_{j=1}^n w_{ij}<1$ and ~${\sum_{j=1}^n a_{ij}<1}$.
\end{ass}
Following Assumption~\ref{ass_Wg}, we obtain from~\eqref{eq_xdot} and~\eqref{eq_ydot}: 
\begin{eqnarray}
	\sum_{i=1}^n \dot{\mb{y}}_i 
% 	&=& -\sum_{i,j=1}^{n} a_{ij}(\mb{y}_i-\mb{y}_j) + \sum_{i=1}^n\frac{d}{d t} \boldsymbol{ \nabla} f_i(\mb{x}_i)\\	
	&=& \sum_{i=1}^n\frac{d}{d t} \boldsymbol{ \nabla} f_i(\mb{x}_i),  \label{eq_sumydot} \\ \label{eq_sumxdot} 
	\sum_{i=1}^n \dot{\mb{x}}_i 
% 	&=& -\sum_{i,j=1}^{n} w_{ij}(\mb{x}_i-\mb{x}_j)-\alpha \sum_{i=1}^n\mb{y}_i \\ 
	&=& -\alpha \sum_{i=1}^n\mb{y}_i.
\end{eqnarray}
Integrating \eqref{eq_sumydot} with respect to~$t$ and initializing the auxiliary variable~$\mb{y}(0)=\mb{0}_{nm}$, we have
\begin{eqnarray} \label{eq_sumxdot2}
	\sum_{i=1}^n \dot{\mb{x}}_i = -\alpha \sum_{i=1}^n\mb{y}_i = -\alpha \sum_{i=1}^n \boldsymbol{ \nabla} f_i(\mb{x}_i),
\end{eqnarray}
which shows that the time-derivative of the sum of states~$\mb{x}_i$'s is towards sum gradient, and therefore, the equilibrium ($\dot {\mathbf x_i}=0_m$) of the dynamics~\eqref{eq_xdot}-\eqref{eq_ydot} is~$\mb{x}^*$ satisfying~${(\mathbf 1_n^\top \otimes I_m) \boldsymbol{ \nabla} F(\mb{x}^*) = \mb{0}_m}$ ($I_m$ as the identity matrix of size $m$), which is the optimal state of  problem~\eqref{eq_prob}~\cite{gharesifard2013distributed}. 
%This is stated in the following lemma.
\begin{lem} \label{lem_invariant}
	Initializing from any~$\mb{x}(0) \neq \mb{1}_n \otimes  {\mb{x}}_0$, for some non-zero~$\mathbf{x}_0\in\mathbb{R}^m$, and~$\mb{y}(0)=\mb{0}_{nm}$, the  state~$[\mb{x}^*;\mb{0}_{nm}]$ with $(\mathbf 1_n^\top \otimes I_m) \boldsymbol{ \nabla} F(\mb{x}^*) = \mb{0}_m$ is an invariant equilibrium point of the dynamics~\eqref{eq_xdot}-\eqref{eq_ydot}.
\end{lem}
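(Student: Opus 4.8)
The plan is to verify directly that $[\mb{x}^*;\mb{0}_{nm}]$ is a \emph{joint} fixed point of the coupled system~\eqref{eq_xdot}--\eqref{eq_ydot}: I will substitute $\mb{x}=\mb{x}^*$ and $\mb{y}=\mb{0}_{nm}$ and check that this forces both $\dot{\mb{x}}=\mb{0}_{nm}$ and $\dot{\mb{y}}=\mb{0}_{nm}$, after which invariance is automatic, since a point at which the velocity field vanishes is left fixed by the flow. Throughout, $\mb{x}^*$ denotes the optimal state of~\eqref{eq_prob}, which by the optimality discussion preceding Lemma~\ref{lem_f(x)} has the consensus form $\mb{x}^*=\mb{1}_n\otimes\overline{\mb{x}}^*$ for some $\overline{\mb{x}}^*\in\mbb{R}^m$ and satisfies $(\mb{1}_n^\top\otimes I_m)\boldsymbol{\nabla}F(\mb{x}^*)=\mb{0}_m$; it is exactly the consensus structure of $\mb{x}^*$ that the computation exploits.

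First I would evaluate~\eqref{eq_xdot} at this candidate. In stacked form, using the Laplacian~\eqref{eq_laplac}, \eqref{eq_xdot} reads $\dot{\mb{x}}=(\overline{W}\otimes I_m)\mb{x}-\alpha\mb{y}$. Since $\mb{x}^*=\mb{1}_n\otimes\overline{\mb{x}}^*$ and $\overline{W}\mb{1}_n=\mb{0}_n$ by Lemma~\ref{lem_laplacian}, the mixing term vanishes, $(\overline{W}\otimes I_m)\mb{x}^*=(\overline{W}\mb{1}_n)\otimes\overline{\mb{x}}^*=\mb{0}_{nm}$ (equivalently, componentwise, $\sum_{j=1}^n w_{ij}(\mb{x}^*_i-\mb{x}^*_j)=\mb{0}_m$ because $\mb{x}^*_i=\mb{x}^*_j=\overline{\mb{x}}^*$ for all $i,j$), while $-\alpha\mb{y}=-\alpha\,\mb{0}_{nm}=\mb{0}_{nm}$. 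Hence $\dot{\mb{x}}_i=\mb{0}_m$ for every $i$.

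Next I would evaluate~\eqref{eq_ydot} at the same point. The diffusion term $\sum_{j=1}^n a_{ij}(\mb{y}_i-\mb{y}_j)$ vanishes since $\mb{y}=\mb{0}_{nm}$, and by~\eqref{eq_dtdf} the forcing term equals $\boldsymbol{\nabla}^2 f_i(\mb{x}^*_i)\dot{\mb{x}}_i$, which is $\boldsymbol{\nabla}^2 f_i(\mb{x}^*_i)\,\mb{0}_m=\mb{0}_m$ by the value of $\dot{\mb{x}}_i$ just computed; here only twice-differentiability of $f_i$ (Lemma~\ref{lem_f(x)}) is used, not definiteness of its Hessian. It is worth flagging that this chain is not circular: \eqref{eq_xdot} is an explicit ordinary differential equation, giving $\dot{\mb{x}}_i$ in terms of $(\mb{x},\mb{y})$ alone with no dependence on $\dot{\mb{y}}$, so $\dot{\mb{x}}_i$ is already fixed before~\eqref{eq_ydot}--\eqref{eq_dtdf} enter. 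Therefore $\dot{\mb{y}}_i=\mb{0}_m$ for every $i$, and $[\mb{x}^*;\mb{0}_{nm}]$ is a joint fixed point, hence an invariant equilibrium of~\eqref{eq_xdot}--\eqref{eq_ydot}.

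I do not expect a genuine obstacle here; the claim is essentially a one-line substitution once two facts are in hand: (i) $\mb{x}^*$ is a consensus vector, so the averaging term of~\eqref{eq_xdot} annihilates it, and (ii) the equilibrium value of the auxiliary state is $\mb{0}_{nm}$. Fact~(ii) is exactly what the initialization $\mb{y}(0)=\mb{0}_{nm}$ buys: together with~\eqref{eq_sumydot}--\eqref{eq_sumxdot2} it yields $\sum_{i=1}^n\mb{y}_i=\sum_{i=1}^n\boldsymbol{\nabla}f_i(\mb{x}_i)$ along trajectories, so that imposing $\dot{\mb{x}}=\mb{0}_{nm}$ together with the diffusion term of~\eqref{eq_ydot} pins down $\mb{y}^*=\mb{0}_{nm}$ and then recovers $(\mb{1}_n^\top\otimes I_m)\boldsymbol{\nabla}F(\mb{x}^*)=\mb{0}_m$; with a little more work one could even show this is the \emph{only} equilibrium consistent with $\mb{y}(0)=\mb{0}_{nm}$, though only the forward direction is asserted. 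The remaining hypothesis $\mb{x}(0)\neq\mb{1}_n\otimes\mb{x}_0$ is not needed for the present statement and is retained for the convergence analysis in Section~\ref{sec_conv}.
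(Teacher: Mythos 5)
Your proof is correct and takes essentially the same route as the paper: direct substitution of the candidate point $[\mb{x}^*;\mb{0}_{nm}]$ into \eqref{eq_xdot}--\eqref{eq_ydot}, using the consensus form $\mb{x}^*=\mb{1}_n\otimes\overline{\mb{x}}^*$ to kill the diffusion term and the chain-rule identity \eqref{eq_dtdf} to conclude $\dot{\mb{y}}_i=\mb{0}_m$. If anything, your version is slightly cleaner: you verify $\dot{\mb{x}}_i=\mb{0}_m$ directly from \eqref{eq_xdot} rather than routing through the summed identity \eqref{eq_sumxdot2} as the paper does, and you correctly flag which hypotheses (the initialization $\mb{y}(0)=\mb{0}_{nm}$ versus the non-consensus initial condition) are actually doing work.
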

\begin{proof}
	From~\eqref{eq_sumxdot2}, 
	%and due to the strict convexity of~$F(\mb{x})$, 
	the following uniquely holds at~${\mb{x}=\mb{x}^*=\mb{1}_n \otimes \overline{ \mb{x}}^*}$,
	\[\sum_{i=1}^n \dot{\mb{x}}_i = -\alpha (\mathbf 1_n^\top \otimes I_m) \boldsymbol{ \nabla} F(\mb{x}^*) =  \mb{0}_m.
	\]
	Further, from~\eqref{eq_xdot} we have~$\dot{\mb{x}}_i = \mb{0}_m$
	and from~\eqref{eq_ydot} and~\eqref{eq_dtdf},
	\[\dot{\mb{y}}_i = \frac{d}{d t} \boldsymbol{ \nabla} f_i(\overline{ \mb{x}}^*)=  \boldsymbol{ \nabla}^2 f_i(\overline{ \mb{x}}^*) \dot{\mb{x}}_i = \mb{0}_m,
	\] 
    which shows that~$[\mb{x}^*;\mb{0}_{nm}]$ is an invariant equilibrium point of the dynamics~\eqref{eq_xdot}-\eqref{eq_ydot}. 
\end{proof}
The above lemma only shows that the state~$[\mb{x}^*;\mb{0}_{nm}]$, with~$\mb{x}^*$ as the optimal point of problem~\eqref{eq_prob}, is the equilibrium of the proposed networked dynamics~\eqref{eq_xdot}-\eqref{eq_ydot}. 
%The detailed proof of convergence, i.e., that~\eqref{eq_xdot}-\eqref{eq_ydot} converge to this equilibrium, is given in the next section. 
Note that the first term in Eq.~\eqref{eq_xdot} drives the agents to reach consensus on~$\mb{x}_i$'s while the second term along with the dynamics~\eqref{eq_ydot} implements the gradient correction. 
%Therefore, Eq.~\eqref{eq_xdot}-\eqref{eq_ydot} can be perceived as a constrained gradient tracking dynamics.

%\begin{rem}
%{\uk Unlike non-Lipschitz solutions~\cite{ning2017distributed,garg2019fixed2,rahili_ren,taes2020finite},  the proposed dynamics~\eqref{eq_xdot}-\eqref{eq_ydot} are Lipschitz continuous, which results in chattering-free DT approximation.}
%\end{rem}
% \begin{rem}
%	Unlike~\cite{gharesifard2013distributed}, which requires ${\sqrt{3}|\operatorname{Im}\{\lambda\}| \leq |\operatorname{Re}\{\lambda\}|}$ with~$\lambda$ as the eigenvalues of Laplacian matrix, we assume no such restriction.  	%on ~$\overline{W}$ and~$\overline{A}$.
%	Moreover, this work considers possibly switching WB-digraphs while~\cite{gharesifard2013distributed} is restricted to time-invariant networks.
%\end{rem}

\section{Proof of Convergence} \label{sec_conv}
In this section, we show that  dynamics~\eqref{eq_xdot}-\eqref{eq_ydot} converge to the equilibrium state described in Lemma~\ref{lem_invariant}.
As it is the case in Section~\ref{sec_pre}, in which the Laplacian matrix for~$W$ is $\overline{W}$, the Laplacian matrix for~$A$ is denoted by~$\overline{A}$. 
Define the~$nm$-by-$nm$  Hessian matrix~$H\coloneqq\text{blockdiag}[\boldsymbol{ \nabla}^2 f_i(\mb{x}_i)]$.
The dynamics~\eqref{eq_xdot}-\eqref{eq_ydot} can be written in compact form as
\begin{eqnarray} \label{eq_xydot}
	\left(\begin{array}{c} \dot{\mb{x}} \\ \dot{\mb{y}} \end{array} \right) = M (t,\alpha) \left(\begin{array}{c} {\mb{x}} \\ {\mb{y}} \end{array} \right),
\end{eqnarray} 
\begin{eqnarray} \label{eq_M}
M(t,\alpha) = \left(\begin{array}{cc} \overline{W} \otimes I_m & -\alpha I_{mn} \\ H(\overline{W}\otimes I_m) & \overline{A} \otimes I_m - \alpha H
\end{array} \right).
\end{eqnarray}
Networked dynamics~\eqref{eq_xydot}-\eqref{eq_M} represents a \textit{hybrid dynamical system} because: (i) the matrix~$H$ varies in CT; and (ii) the structure of $\overline{W}$ and~$\overline{A}$ may change in DT in case of \textit{dynamic network topology}. In this direction, towards convergence analysis, we first evaluate the stability properties of the matrix~$M$ at every time-instant, and then generalize the convergence to the entire time horizon. In the rest of this paper for notation simplicity, we drop the dependence of~$M$ on~$(t,\alpha)$, unless where needed, despite the fact that it is a function of both time~$t$ and stepsize~$\alpha$. 

\begin{lem} \label{lem_dM} {\hspace{-0.005cm}~\cite{seyranian2003multiparameter,cai2012average}}
Let~$P(\alpha)$ be an~$n$-by-$n$ matrix depending smoothly on a real parameter~$\alpha \geq 0$. Assume~$P(0)$ has~$l<n$ equal eigenvalues, denoted by~$\lambda_1=\ldots=\lambda_l$, associated with right eigenvectors~$\mb{v}_1,\ldots,\mb{v}_l$ and left eigenvectors~$\mb{u}_1,\ldots,\mb{u}_l$ which are linearly independent. 
Let~$\lambda_i(\alpha)$ denote the eigenvalues of~$P(\alpha)$, as a function of~$\alpha$, corresponding to~${\lambda_i, i \in \{1,\ldots,l\}}$, and~$P' = \frac{dP(\alpha)}{d\alpha}|_{\alpha=0}$. Then,~$\frac{d\lambda_i}{d\alpha}|_{\alpha=0}$ are the eigenvalues of the following~$l$-by-$l$ matrix,
\[\left(\begin{array}{ccc}
\mb{u}_1^\top P' \mb{v}_1 & \ldots & \mb{u}_1^\top P' \mb{v}_l \\
 & \ddots & \\
\mb{u}_l^\top P' \mb{v}_1 & \ldots & \mb{u}_l^\top P' \mb{v}_l 
\end{array} \right).
\]  
\end{lem}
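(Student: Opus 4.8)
The plan is to reduce the problem to a genuinely $l$-dimensional one by restricting $P(\alpha)$ to the invariant subspace that carries the eigenvalues splitting off from the common value $\lambda_\star\coloneqq\lambda_1=\dots=\lambda_l$. First I would collect the eigenvectors into the full-column-rank matrices $V=[\mb v_1,\dots,\mb v_l]$ and $U=[\mb u_1,\dots,\mb u_l]$, so $P(0)V=\lambda_\star V$ and $U^\top P(0)=\lambda_\star U^\top$. Because $P(0)$ has $l$ independent right and $l$ independent left eigenvectors at $\lambda_\star$, this eigenvalue is semisimple: $\mathbb C^n=\mathrm{range}(V)\oplus\mathcal N$ with $\mathcal N$ a $P(0)$-invariant complement not containing $\lambda_\star$ in its spectrum, and the bilinear pairing $(\mb u,\mb v)\mapsto\mb u^\top\mb v$ between the left and right eigenspaces is nondegenerate. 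I would therefore normalize the eigenvectors bi-orthonormally, $U^\top V=I_l$ (which leaves the $l\times l$ matrix in the statement unchanged), so that $\Pi\coloneqq VU^\top$ is exactly the spectral projection onto $\mathrm{range}(V)$ along $\mathcal N$.

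Next I would make the invariant subspace move smoothly with $\alpha$. Fix a circle $\Gamma\subset\mathbb C$ around $\lambda_\star$ enclosing no other eigenvalue of $P(0)$; for all small $\alpha$ it encloses exactly $\lambda_1(\alpha),\dots,\lambda_l(\alpha)$, and the Riesz projection
\begin{equation*}
\Pi(\alpha)=\frac{1}{2\pi i}\oint_\Gamma (zI-P(\alpha))^{-1}\,dz
\end{equation*}
depends smoothly on $\alpha$ with $\Pi(0)=\Pi$. Then $V(\alpha)\coloneqq\Pi(\alpha)V$ is a smooth $n\times l$ matrix whose columns form a basis of $\mathrm{range}(\Pi(\alpha))$ near $\alpha=0$, with $V(0)=V$; since that subspace is $P(\alpha)$-invariant there is a unique smooth $l\times l$ matrix $\wh P(\alpha)=\big(V(\alpha)^\top V(\alpha)\big)^{-1}V(\alpha)^\top P(\alpha)V(\alpha)$ with $P(\alpha)V(\alpha)=V(\alpha)\wh P(\alpha)$, whose eigenvalues are exactly $\lambda_1(\alpha),\dots,\lambda_l(\alpha)$ and with $\wh P(0)=\lambda_\star I_l$.

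The final step is to differentiate $P(\alpha)V(\alpha)=V(\alpha)\wh P(\alpha)$ at $\alpha=0$: writing $V'=\frac{d}{d\alpha}V(\alpha)|_{\alpha=0}$ and using $P(0)V=\lambda_\star V$ and $\wh P(0)=\lambda_\star I_l$ gives $P'V+(P(0)-\lambda_\star I)V'=V\wh P'(0)$, and left-multiplying by $U^\top$ with $U^\top(P(0)-\lambda_\star I)=0$ and $U^\top V=I_l$ yields $\wh P'(0)=U^\top P'V$, which is precisely the matrix displayed in the lemma. Since $\wh P(\alpha)=\lambda_\star I_l+\alpha\,U^\top P'V+\mc{O}(\alpha^2)$ has eigenvalues $\lambda_i(\alpha)$, matching first-order terms identifies $\frac{d\lambda_i}{d\alpha}\big|_{\alpha=0}$ with the eigenvalues of $U^\top P'V$.

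I expect the main obstacle to be the second step: establishing that the invariant subspace (equivalently $\Pi(\alpha)$, hence $\wh P(\alpha)$) can be taken to depend \emph{smoothly} on $\alpha$, which is the substantive input from analytic matrix perturbation theory and the reason the statement is attributed to \cite{seyranian2003multiparameter,cai2012average}. A secondary subtlety is the last step: if $U^\top P'V$ has a repeated eigenvalue the individual branches $\lambda_i(\alpha)$ need only admit Puiseux (fractional-power) expansions rather than being differentiable, so the conclusion is to be read under the genericity that $U^\top P'V$ has simple eigenvalues, or in the corresponding generalized sense; in the use made of this lemma the relevant matrix will have the required structure.
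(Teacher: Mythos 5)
The paper does not prove this lemma at all --- it is stated as a citation to \cite{seyranian2003multiparameter,cai2012average} and used as a black box --- so there is no ``paper proof'' to match against. Your argument is the standard reduction-to-the-invariant-subspace proof of the semisimple-eigenvalue perturbation theorem (Riesz projection $\Pi(\alpha)$, smooth basis $V(\alpha)=\Pi(\alpha)V$, reduced operator $\wh P(\alpha)$, differentiate $P(\alpha)V(\alpha)=V(\alpha)\wh P(\alpha)$ and kill the $V'$ term with $U^\top$), and it is correct in substance, including the correct identification of where the analytic-perturbation input enters and the caveat about Puiseux branches when $U^\top P'V$ has repeated eigenvalues (though note the one-sided derivative \emph{at} $\alpha=0$ still exists there, since the eigenvalues of $\alpha^{-1}\wh P(\alpha)=U^\top P'V+\mc O(\alpha)$ converge to those of $U^\top P'V$; only differentiability of the branches near $0$ can fail).

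Two points deserve correction. First, your parenthetical that bi-orthonormalizing ``leaves the $l\times l$ matrix in the statement unchanged'' is false: replacing $U$ by $U(V^\top U)^{-\top}$ changes $U^\top P'V$ to $(U^\top V)^{-1}U^\top P'V$, which is not similar to $U^\top P'V$ in general. The lemma as displayed is only correct under the normalization $U^\top V=I_l$; without it the derivatives are the eigenvalues of $(U^\top V)^{-1}U^\top P'V$. This is not pedantry for this paper: in the application following the lemma the chosen eigenvectors satisfy $V^\top V=nI_{2m}$ (not the identity, as claimed there), so the relevant matrix is $\tfrac1n V^\top M_1V$ --- harmless for the sign conclusions of Theorem~\ref{thm_zeroeig}, but your proof should state the normalization as a hypothesis rather than as a free adjustment. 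Second, the explicit left inverse $\big(V(\alpha)^\top V(\alpha)\big)^{-1}V(\alpha)^\top$ can degenerate for complex eigenvectors (e.g.\ $\mb v=(1,i)^\top$ gives $\mb v^\top\mb v=0$); since $U^\top V(\alpha)$ is invertible for small $\alpha$, define $\wh P(\alpha)=\big(U^\top V(\alpha)\big)^{-1}U^\top P(\alpha)V(\alpha)$ instead, which also makes the final differentiation cleaner. With these repairs the proof is complete.
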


\begin{thm} \label{thm_zeroeig}
Let Assumption~\ref{ass_Wg} hold. For a sufficiently small~$\alpha$, all eigenvalues of~$M$ have non-positive real-parts and the algebraic multiplicity of the zero eigenvalue is~$m$. 
\end{thm}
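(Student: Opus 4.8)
The plan is to treat $H$ as a fixed symmetric positive-definite block-diagonal matrix (legitimate since we analyze $M$ at a single time instant, cf.\ Lemma~\ref{lem_f(x)}) and to perturb the stepsize $\alpha$ away from $0$. At $\alpha=0$ the matrix $M(0)$ is block lower-triangular with diagonal blocks $\overline W\otimes I_m$ and $\overline A\otimes I_m$, so its spectrum is the union (with multiplicities) of $\mathrm{spec}(\overline W\otimes I_m)$ and $\mathrm{spec}(\overline A\otimes I_m)$; by Lemma~\ref{lem_sc} applied to the two WB-digraphs, every eigenvalue has non-positive real part, the value $0$ occurs with algebraic multiplicity $2m$, and the remaining $2nm-2m$ eigenvalues lie in the open left half-plane. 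Using Lemma~\ref{lem_laplacian} I would check that $\ker M(0)=\{(\mb 1_n\otimes a;\,\mb 1_n\otimes b):a,b\in\mathbb R^m\}$ and that the left null space is likewise $2m$-dimensional, so the zero eigenvalue of $M(0)$ is semisimple.

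Next I would apply Lemma~\ref{lem_dM} to the $l=2m$ coincident zero eigenvalues of $M(0)$. Here $P'=\tfrac{dM}{d\alpha}\big|_{\alpha=0}=\left(\begin{smallmatrix}0 & -I_{mn}\\ 0 & -H\end{smallmatrix}\right)$, the right eigenvectors are $v=(\mb 1_n\otimes a;\,\mb 1_n\otimes b)$, and solving $u^\top M(0)=\mathbf 0$ gives left eigenvectors $u=(u_1;\,\mb 1_n\otimes d)$ with $u_1^\top=\mb 1_n^\top\otimes c^\top-(\mb 1_n\otimes d)^\top H$, for $a,b,c,d\in\mathbb R^m$. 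A direct computation then yields $u^\top P'v=-n\,c^\top b$ and $u^\top v=n\,c^\top a-d^\top\overline H a+n\,d^\top b$, where $\overline H:=\sum_{i=1}^n\boldsymbol{\nabla}^2 f_i(\mb x_i)\succ0$. Reading off the derivatives as the roots $\mu$ of the basis-independent pencil $\det(U^\top P'V-\mu\,U^\top V)=0$ (equivalently, after biorthonormalizing the families, as the eigenvalues of $[\,u_i^\top P'v_j\,]$ in Lemma~\ref{lem_dM}), the conditions reduce to $b=-\mu a$ and $\mu(\overline H a-nb)=\mathbf 0$; hence the $2m$ values $\tfrac{d\lambda_i}{d\alpha}\big|_{0}$ are $0$ with multiplicity $m$ (the branches with $b=\mathbf 0$) together with $-\tfrac1n\lambda_k(\overline H)<0$, $k=1,\dots,m$.

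To finish I would combine three facts: (i)~the $2nm-2m$ nonzero eigenvalues of $M(0)$ stay in the open left half-plane for $\alpha$ small, by continuity of the spectrum; (ii)~the $m$ branches with first-order perturbation $-\lambda_k(\overline H)/n<0$ enter the open left half-plane for $\alpha$ small; and (iii)~the $m$ branches with vanishing first-order perturbation stay pinned at $0$. For (iii) I would argue directly on $M$ rather than through perturbation: the vectors $(\mb 1_n\otimes e_k;\,\mathbf 0)$, $k=1,\dots,m$, lie in $\ker M$ for every $\alpha>0$, so $\dim\ker M\ge m$; and substituting a putative $w$ with $Mw=(\mb 1_n\otimes a;\,\mathbf 0)$ into the block equations and eliminating $w_2$ yields $(\overline A\,\overline W\otimes I_m)w_1=-\alpha H(\mb 1_n\otimes a)$, which on left-multiplication by $\mb 1_n^\top\otimes I_m$ forces $\overline H a=\mathbf 0$, i.e.\ $a=\mathbf 0$, so $\ker M^2=\ker M$ and $0$ is semisimple. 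Hence for all sufficiently small $\alpha$ every eigenvalue of $M$ has non-positive real part and the zero eigenvalue has algebraic multiplicity exactly $m$.

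The step I expect to be the crux is (iii): first-order perturbation theory alone certifies only that those $m$ branches leave the origin no faster than $o(\alpha)$, not that they are stationary, and pushing the expansion to second order is unpleasant; the efficient route is the direct kernel/semisimplicity argument, which hinges entirely on $\overline W,\overline A$ being WB-Laplacians ($\mb 1_n^\top\overline W=\mb 1_n^\top\overline A=\mathbf 0$) and on positive-definiteness of $\overline H$ from Lemma~\ref{lem_f(x)}. A minor additional care point is that Lemma~\ref{lem_dM} is stated for biorthonormal left/right eigenvector families, so one should either normalize accordingly or, as above, read the derivatives off the basis-independent generalized eigenproblem.
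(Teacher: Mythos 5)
Your proposal follows the same skeleton as the paper's proof --- split $M=M_0+\alpha M_1$ with $M_0$ block lower-triangular, identify the $2m$ semisimple zero eigenvalues of $M_0$, and apply Lemma~\ref{lem_dM} to conclude that $m$ of them move into the open left half-plane while $m$ stay at zero --- and it arrives at the same reduced matrix (eigenvalues $0$ with multiplicity $m$ together with those of $-\tfrac1n\sum_i\boldsymbol{\nabla}^2 f_i$). However, you patch two points where the paper is not airtight, and both patches are genuine improvements. First, the paper takes the left null basis of $M_0$ to be $V^\top$, but because of the off-diagonal block $H(\overline{W}\otimes I_m)$ the rows $(\mb{0},\,\mb{1}_n^\top\otimes e_k^\top)$ are not left eigenvectors of $M_0$ unless all local Hessians coincide; your corrected left eigenvectors $u_1^\top=\mb{1}_n^\top\otimes c^\top-(\mb{1}_n\otimes d)^\top H$ and the basis-independent pencil $\det(U^\top P'V-\mu\,U^\top V)=0$ handle this properly (and happen to reproduce the paper's answer, which explains why the paper's shortcut does not change the conclusion). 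Second, the paper infers from $\frac{d\lambda_{1,j}}{d\alpha}|_{\alpha=0}=0$ that these eigenvalues ``remain zero,'' which first-order perturbation theory alone does not justify --- they could a priori drift into the right half-plane at second order; your direct argument that $(\mb{1}_n\otimes e_k;\mb{0}_{mn})\in\ker M$ for every $\alpha$, together with the computation forcing $\ker M^2=\ker M$ via $\mb{1}_n^\top\overline{W}=\mb{1}_n^\top\overline{A}=\mb{0}_n^\top$ and $\sum_i\boldsymbol{\nabla}^2f_i\succ0$, closes this gap and simultaneously pins the algebraic multiplicity at exactly $m$. In short: same route as the paper, executed more rigorously at the two steps that actually need care.
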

\begin{proof}
Let~$M=M_0+\alpha M_1$ with 
\begin{eqnarray}\nonumber
	M_0 &=&  \left(\begin{array}{cc} \overline{W} \otimes I_m & \mb{0}_{mn\times mn} \\ H(\overline{W} \otimes I_m) & \overline{A}\otimes I_m \end{array} \right),\\\nonumber
	M_1 &=& \left(\begin{array}{cc} \mb{0}_{mn\times mn} & - {I_{mn}} \\ {\mb{0}_{mn\times mn}} & - H \end{array} \right),
\end{eqnarray}
where~$\mb{0}_{mn\times mn}$ is the zero matrix of size~$mn$.
Since matrix~$M_0$ is block (lower) triangular we have,
\begin{eqnarray}
	\sigma(M_0) = \sigma(\overline{W} \otimes I_m) \cup \sigma(\overline{A} \otimes I_m),
\end{eqnarray}
where~$\sigma(\cdot)$ represents the eigenspectrum of the matrix. From Lemma~\ref{lem_sc}, both matrices ~$\overline{W}$ and~$\overline{A}$ have~$n-1$ eigenvalues in the LHP (left-half plane) and one isolated eigenvalue at zero. Therefore,  matrix~$M_0$ has~$m$ sets of eigenvalues associated with~$m$ dimensions of vector states~$\mb{x}_i$ i.e.,
$$\operatorname{Re}\{\lambda_{2n,j}\} \leq \ldots \leq \operatorname{Re}\{\lambda_{3,j}\} < \lambda_{2,j} = \lambda_{1,j} = 0,$$ where $j=\{1,\ldots,m\}.$
Using Lemma~\ref{lem_dM}, we analyze  the spectrum of~$M$ by considering it as the perturbed version of~$M_0$ via the term~$\alpha M_1$. We check the variation of the zero eigenvalues~$\lambda_{1,j}$ and~$\lambda_{2,j}$ by adding the (small) perturbation~$\alpha M_1$. Denote these perturbed eigenvalues by~$\lambda_{1,j}(\alpha)$ and~$\lambda_{2,j}(\alpha)$. To apply Lemma~\ref{lem_dM}, define the right eigenvectors corresponding to~$\lambda_{1,j}$ and~$\lambda_{2,j}$ as,
%    \[V = (\mb{v}_{1,1} \ldots \mb{v}_{1,m} ~\mb{v}_{2,1} \ldots \mb{v}_{2,m}) =\left(\begin{array}{cc}
%    \mb{1}_n& \mb{0}_n \\
%	\mb{0}_n & \mb{1}_n 
%    \end{array} \right)\otimes I_m \]
\begin{equation} \label{eq_V}
V = [V_1~ V_2] =\left(\begin{array}{cc}
	\mb{1}_n& \mb{0}_n \\
	\mb{0}_n & \mb{1}_n 
\end{array} \right)\otimes I_m,
\end{equation}
%\[ U^\top =\left(\begin{array}{c}
%	\mb{u}_{1,1}^\top\\ \vdots \\ \mb{u}_{1,m}^\top \\ \mb{u}_{2,1}^\top \\ \vdots \\ \mb{u}_{2,m}^\top
%\end{array}
%\right)=\left(\begin{array}{cc}
%	\mb{1}_n^\top	& \mb{0}_n^\top \\
%	\mb{0}_n^\top	&  \mb{1}_n^\top 
%\end{array} \right) \otimes I_m
%\]
    Similarly, the left eigenvectors are~$V^\top$. Note that these eigenvectors are defined using Lemma~\ref{lem_laplacian} and satisfy~$V^\top V=I_{2mn}$. Recall that, ~$\frac{dM(\alpha)}{d\alpha}|_{\alpha=0}=M_1$ and following Lemma~\ref{lem_dM},
    	\begin{eqnarray} \label{eq_dmalpha}
    		V^\top M_1 V= \left(\begin{array}{cc}
    \mb{0}_{m\times m}	& \mb{0}_{m\times m} \\
    -nI_m	& -(\mb{1}_n \otimes I_m)^\top H (\mb{1}_n \otimes I_m)
    \end{array} \right).
    \end{eqnarray}  
    Following the definition of the Hessian matrix~$H$,
    \begin{equation} \label{eq_sum_df}
    	-(\mb{1}_n \otimes I_m)^\top H (\mb{1}_n \otimes I_m)= -\sum_{i=1}^n \boldsymbol{ \nabla}^2 f_i(\mb{x}_i) \prec 0,
    \end{equation}
    where the last inequality follows the strict convexity of the local functions~$f_i(\mb{x}_i)$ (Lemma~\ref{lem_f(x)}). 
    %Note that only for~$\mb{x}_i=\overline{\mb{x}}^*$ we have~$\sum_{i=1}^n \boldsymbol{ \nabla}^2 f_i(\mb{x}_i) = \mb{0}_{m\times m}$. 
    Recall that from Lemma~\ref{lem_dM} the derivatives~$\frac{d\lambda_{1,j}}{d\alpha}|_{\alpha=0}$ and~$\frac{d\lambda_{2,j}}{d\alpha}|_{\alpha=0}$ depend on the eigenvalues of~\eqref{eq_dmalpha} which clearly form a lower triangular matrix with~$m$ zero eigenvalues and~$m$ negative eigenvalues (following~\eqref{eq_sum_df}). Therefore,~$\frac{d\lambda_{1,j}}{d\alpha}|_{\alpha=0} = 0$ and~$\frac{d\lambda_{2,j}}{d\alpha}|_{\alpha=0}<0$, 
    %(for~$\mb{x}_i \neq \overline{\mb{x}}^*$) 
    which implies that considering~$\alpha M_1$ as a perturbation, the~$m$ zero eigenvalues~$\lambda_{2,j}(\alpha)$ of~$M$ move toward the LHP while~$\lambda_{1,j}(\alpha)$'s remain zero. We recall that the eigenvalues are a continuous functions of the matrix elements~\cite{stewart1990matrix}, and therefore, for sufficiently small~$\alpha$ we have,    
    \begin{equation}
    \begin{aligned}
         \operatorname{Re}\{\lambda_{2n,j}(\alpha)\} &\leq \ldots \leq \operatorname{Re}\{\lambda_{3,j}(\alpha)\} \\
        &\leq \lambda_{2,j}(\alpha) < \lambda_{1,j}(\alpha) = 0.
    \end{aligned}  
    \end{equation}
The proof is completed.
\end{proof}
Theorem~\ref{thm_zeroeig} proves that for sufficiently small~$\alpha$ the matrix~$M$ defined by~\eqref{eq_M} has~$m$ zero eigenvalue each associated with a dimension of variable~$\mb{x}_i$,  while all other eigenvalues remain in the LHP. Next, we determine an  upper-bound on~$\alpha$ guaranteeing  the results of Theorem~\ref{thm_zeroeig}. In this direction,  we provide some relevant concepts regarding~$\sigma(M_0)$ and~$\sigma(M)$ as the spectrum of~$M_0$ and~$M$ matrices. Define the optimal matching distance as~\cite{bhatia2013matrix}, \[d(\sigma(M),\sigma(M_0)) = \min_{\pi} \max_{1\leq i\leq 2nm} (\lambda_i - \lambda_{\pi(i)}(\alpha)),
\]
with~$\pi$ representing all possible permutations over the set~$\{1,\ldots,2nm\}$. In fact,~$d(\sigma(M),\sigma(M_0))$ is the smallest-radius circle among the circles centered at~$\lambda_{1,j},\ldots,\lambda_{2n,j}$ (eigenvalues of~$M_0$) which includes all the eigenvalues of~$M$ denoted by~$\lambda_{1,j}(\alpha),\ldots,\lambda_{2n,j}(\alpha)$. Loosely speaking,~$d(\sigma(M),\sigma(M_0))$ represents the furthest distance between the eigenvalues of matrices~$M$ and~$M_0$. Recall that from Theorem~\ref{thm_zeroeig} we know that the first~$2m$ eigenvalues of the perturbed matrix~$M$ are~$\lambda_{1,j}(\alpha)=0$ and~$\lambda_{2,j}(\alpha)<0$. To show that all the other~$(2n-2)m$ eigenvalues~$\lambda_{3,j}(\alpha),\ldots,\lambda_{2n,j}(\alpha)$ remain in the LHP, it is sufficient  that,
\begin{eqnarray} \label{eq_lambda3}
	d(\sigma(M),\sigma(M_0))<\lambda_{\min},
\end{eqnarray}
where~$\lambda_{\min}=\min_{1\leq j\leq m}|\operatorname{Re}\{\lambda_{3,j}\}|$. This guarantees that the distance between the~$(2n-2)m$ eigenvalues of~$M_0$ and~$M$ is less than~$\lambda_{\min}$ and therefore all the~$(2n-2)m$ eigenvalues of~$M$ remain in the LHP. In this direction, the following lemma provides a useful bound on~$d(\sigma(M),\sigma(M_0))$. 
\begin{lem} \label{eq_dbound}\cite{bhatia2013matrix}
The following holds for~${M = M_0 +\alpha M_1}$:
\begin{eqnarray*}
d(\sigma(M),\sigma(M_0))\leq 4(\lVert M_0\rVert_{\infty}+\lVert M\rVert_{\infty})^{1-\frac{1}{nm}} \lVert \alpha M_1\rVert_\infty^{\frac{1}{nm}}.
\end{eqnarray*}
\end{lem}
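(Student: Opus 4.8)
The statement is a classical Elsner-type bound on the spectral variation of (possibly non-normal) matrices, so the plan is to reconstruct its proof from three ingredients applied to the matrices $M_0$ and $M=M_0+\alpha M_1$. \emph{Step 1 (one-sided estimate via determinant and singular values).} Fix an eigenvalue $\mu\in\sigma(M)$ with unit eigenvector $v$. Since $Mv=\mu v$ and $M-M_0=\alpha M_1$, we get $(M_0-\mu I)v=-\alpha M_1 v$, so $\lVert(M_0-\mu I)v\rVert\le\lVert\alpha M_1\rVert$; hence the smallest singular value of $M_0-\mu I$ is at most $\lVert\alpha M_1\rVert$, while each remaining singular value is at most $\lVert M_0-\mu I\rVert\le\lVert M_0\rVert+|\mu|\le\lVert M_0\rVert+\lVert M\rVert$ (the spectral radius is dominated by any operator norm). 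Multiplying the singular values and using $\prod_k s_k(M_0-\mu I)=|\det(M_0-\mu I)|=\prod_k|\lambda_k(M_0)-\mu|$, then $\min_k|\lambda_k(M_0)-\mu|\le(\prod_k|\lambda_k(M_0)-\mu|)^{1/N}$ with $N$ the matrix dimension, gives $\min_k|\lambda_k(M_0)-\mu|\le(\lVert M_0\rVert+\lVert M\rVert)^{1-1/N}\lVert\alpha M_1\rVert^{1/N}$.

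\emph{Step 2 (symmetrize).} Running Step 1 with the roles of $M$ and $M_0$ interchanged bounds the reverse one-sided deviation, so the Hausdorff distance between $\sigma(M)$ and $\sigma(M_0)$ is at most the same quantity. \emph{Step 3 (from Hausdorff distance to matching distance).} To promote this into a bound on $d(\sigma(M),\sigma(M_0))=\min_\pi\max_i|\lambda_i-\lambda_{\pi(i)}(\alpha)|$, I would form the bipartite graph joining $\lambda_i(M_0)$ to $\lambda_j(M)$ whenever $|\lambda_i(M_0)-\lambda_j(M)|$ does not exceed a suitable constant multiple of the Step~1 radius, and verify Hall's marriage condition: for any set $S$ of eigenvalues of $M_0$ one argues, via a determinant/analyticity estimate on $\prod_{i\in S}(z-\lambda_i(M_0))$ restricted to an appropriate $M$-invariant subspace, that its neighbourhood has cardinality at least $|S|$. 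Hall's theorem then yields a perfect matching, and the slack required to push the Hall verification through is precisely what inflates the constant to $4$. (Alternatively, one tracks the eigenvalues continuously along the segment $M_0+t\alpha M_1$, $t\in[0,1]$, but this is harder to keep quantitative.)

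\emph{Main obstacle.} Step~3 is the crux: passing from ``every eigenvalue of one matrix is close to some eigenvalue of the other'' to ``one permutation pairs all of them up with small displacement'' genuinely fails for clustered or repeated eigenvalues without extra room, and it is this combinatorial/analytic step that produces the factor $4$; Steps~1--2 are routine. A secondary technical point is that Step~1 is most transparent in the spectral norm, so to land exactly on the $\lVert\cdot\rVert_\infty$ of the statement one either repeats the whole estimate with the $\ell_\infty$-induced operator norm (which is submultiplicative and still dominates the spectral radius) or absorbs a fixed norm-equivalence constant; either way the resulting inequality is exactly what is subsequently needed to enforce $d(\sigma(M),\sigma(M_0))<\lambda_{\min}$ in \eqref{eq_lambda3}.
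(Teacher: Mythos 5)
First, a framing point: the paper does not prove this lemma at all --- it is imported verbatim from the cited reference (it is the Elsner / Bhatia--Elsner--Krause matching bound), so there is no in-paper argument to compare yours against; I can only judge your reconstruction on its own terms. Your Steps 1--2 are a correct account of Elsner's one-sided bound: from $(M_0-\mu I)v=-\alpha M_1 v$ the smallest singular value of $M_0-\mu I$ is at most $\lVert \alpha M_1\rVert_2$, and $|\det(M_0-\mu I)|=\prod_k s_k(M_0-\mu I)$ then gives $\min_k|\lambda_k-\mu|\le(\lVert M_0\rVert_2+\lVert M\rVert_2)^{1-1/N}\lVert\alpha M_1\rVert_2^{1/N}$. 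That part is routine and right.

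Two genuine gaps remain. First, Step 3 is the entire content of the lemma and you leave it as a plan rather than a proof. You correctly observe that a Hausdorff-type bound does not control the matching distance when eigenvalues cluster or repeat, but the Hall's-marriage verification you sketch --- that every subset $S\subseteq\sigma(M_0)$ has an $\epsilon$-neighbourhood meeting at least $|S|$ eigenvalues of $M$, to be established ``via a determinant/analyticity estimate on an $M$-invariant subspace'' --- is not carried out, and the claim that the required slack ``is precisely what inflates the constant to $4$'' is asserted, not derived. The proof in the cited reference instead runs the homotopy $M(t)=M_0+t\alpha M_1$, uses continuity of the spectrum to count eigenvalues inside connected components of unions of disks centred at the endpoint spectra, and extracts the constant $4$ from explicit estimates along that path --- i.e., exactly the alternative you set aside as ``harder to keep quantitative.'' Second, the norm mismatch you flag is real but neither of your proposed fixes closes it. Elsner's argument needs $s_1(C)\le\lVert C\rVert$ for every matrix $C$; this holds for the spectral norm but fails for the induced $\infty$-norm (e.g.\ $C=\mathbf{1}_n e_1^\top$ has $\lVert C\rVert_\infty=1$ yet $s_1(C)=\sqrt{n}$), so ``repeating the whole estimate with the $\ell_\infty$-induced operator norm'' breaks precisely at the step bounding the $N-1$ large singular values; and absorbing a norm-equivalence constant replaces $4$ by a dimension-dependent quantity, which matters here because the explicit constant $4$ and the $\infty$-norms are exactly what the step-size bound of Lemma~\ref{lem_alphabar} consumes downstream.
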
 
From~\eqref{eq_lambda3} and Lemma~\ref{eq_dbound}, $\alpha$ can be bounded as follows.
\begin{lem} \label{lem_alphabar}
Define~${\gamma = \max_{1\leq i\leq nm} \sum_{j=1}^{nm} |H_{ij}|}$ and ${\lambda_{\min}=\min_{1\leq j\leq m}|\operatorname{Re}\{\lambda_{3,j}\}|}$. The real-part of the eigenvalues~${\operatorname{Re}\{\lambda_{3,j}(\alpha)\},\ldots,\operatorname{Re}\{\lambda_{2n,j}(\alpha)\}<0}$ if~$0<\alpha<\overline{\alpha}$ where for~$\gamma<1$,
\begin{equation} \label{eq_alphabar1}
\overline{\alpha} = \argmin_{\alpha>0} |4(\max\{4+4\gamma+\alpha\gamma, 4+2\gamma+\alpha\})^{1-\frac{1}{nm}} \alpha^{\frac{1}{nm}}-\lambda_{\min}|,
\end{equation}	
and for~$\gamma\geq 1$,		
\begin{equation} \label{eq_alphabar2}
\overline{\alpha} = \argmin_{\alpha>0} |4(4+4\gamma+\alpha\gamma)^{1-\frac{1}{nm}} (\alpha \gamma)^{\frac{1}{nm}}-\lambda_{\min}|.
\end{equation}
\end{lem}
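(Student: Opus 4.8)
The plan is to convert the sufficient condition established just before Lemma~\ref{eq_dbound}, namely $d(\sigma(M),\sigma(M_0))<\lambda_{\min}$ from \eqref{eq_lambda3}, into an explicit threshold on $\alpha$ by feeding crude-but-computable infinity-norm estimates into the perturbation bound of Lemma~\ref{eq_dbound}. Writing $M=M_0+\alpha M_1$ as in the proof of Theorem~\ref{thm_zeroeig}, Lemma~\ref{eq_dbound} gives $d(\sigma(M),\sigma(M_0))\le 4(\lVert M_0\rVert_\infty+\lVert M\rVert_\infty)^{1-\frac1{nm}}\lVert\alpha M_1\rVert_\infty^{\frac1{nm}}$, so it suffices to upper-bound each of $\lVert M_0\rVert_\infty$, $\lVert M\rVert_\infty$, $\lVert\alpha M_1\rVert_\infty$ in terms of $\alpha$ and $\gamma$, substitute, and then take $\overline\alpha$ to be the value of $\alpha$ at which the resulting (increasing) function of $\alpha$ first attains $\lambda_{\min}$.

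First I would bound the building blocks. By Assumption~\ref{ass_Wg}, each row of the Laplacian $\overline W$ has absolute row-sum $\lvert\overline{w}_{ii}\rvert+\sum_{j\neq i}\lvert\overline{w}_{ij}\rvert=2\sum_j w_{ij}<2$, and the same holds for $\overline A$; since tensoring with $I_m$ merely redistributes entries without changing absolute row-sums, $\lVert\overline W\otimes I_m\rVert_\infty=\lVert\overline W\rVert_\infty<2$ and $\lVert\overline A\otimes I_m\rVert_\infty<2$. By the definition of $\gamma$ we have $\lVert H\rVert_\infty=\gamma$, and submultiplicativity of $\lVert\cdot\rVert_\infty$ gives $\lVert H(\overline W\otimes I_m)\rVert_\infty\le\lVert H\rVert_\infty\,\lVert\overline W\otimes I_m\rVert_\infty<2\gamma$. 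Reading off $M_0$ and $M$ one block-row at a time then yields $\lVert M_0\rVert_\infty\le 2+2\gamma$, $\lVert M\rVert_\infty\le\max\{2+\alpha,\;2+2\gamma+\alpha\gamma\}$ (the first term from the top block-row $[\,\overline W\otimes I_m\ \ -\alpha I_{mn}\,]$, the second from the bottom block-row $[\,H(\overline W\otimes I_m)\ \ \overline A\otimes I_m-\alpha H\,]$), and $\lVert\alpha M_1\rVert_\infty=\alpha\max\{1,\gamma\}$, since $M_1$ has exactly one entry $-1$ per row in its top block and the rows of $-H$ in its bottom block.

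It then remains to split on the size of $\gamma$ and assemble. For $\gamma<1$ one gets $\lVert\alpha M_1\rVert_\infty=\alpha$ and $\lVert M_0\rVert_\infty+\lVert M\rVert_\infty\le\max\{4+4\gamma+\alpha\gamma,\;4+2\gamma+\alpha\}$, while for $\gamma\ge1$ the bottom block-row always dominates, giving $\lVert\alpha M_1\rVert_\infty=\alpha\gamma$ and $\lVert M_0\rVert_\infty+\lVert M\rVert_\infty\le 4+4\gamma+\alpha\gamma$; substituting into Lemma~\ref{eq_dbound} reproduces exactly the functions of $\alpha$ inside the $\argmin$ in \eqref{eq_alphabar1} and \eqref{eq_alphabar2}. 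Each such function $g(\alpha)$ is continuous, strictly increasing on $(0,\infty)$, and satisfies $g(0)=0<\lambda_{\min}$, so $g(\alpha)=\lambda_{\min}$ has a unique positive root, which is precisely $\overline\alpha$; for every $0<\alpha<\overline\alpha$ we then have $d(\sigma(M),\sigma(M_0))\le g(\alpha)<\lambda_{\min}$, which by \eqref{eq_lambda3} forces $\operatorname{Re}\{\lambda_{3,j}(\alpha)\},\ldots,\operatorname{Re}\{\lambda_{2n,j}(\alpha)\}<0$, as claimed. I expect the main obstacle to be the careful bookkeeping of these block-matrix infinity norms --- in particular keeping track of how the Kronecker factor $I_m$ and the submultiplicative estimate for $H(\overline W\otimes I_m)$ combine, and cleanly resolving the $\gamma<1$ versus $\gamma\ge1$ dichotomy, which is what converts a single bound into the two cases stated in the lemma.
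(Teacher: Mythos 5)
Your proposal is correct and follows essentially the same route as the paper: it converts the sufficient condition \eqref{eq_lambda3} into a threshold on $\alpha$ by bounding $\lVert M_0\rVert_\infty\le 2(1+\gamma)$, $\lVert M\rVert_\infty\le\max\{2+\gamma(2+\alpha),\,2+\alpha\}$ and $\lVert\alpha M_1\rVert_\infty\le\max\{\alpha\gamma,\alpha\}$ block-row by block-row, substituting into Lemma~\ref{eq_dbound}, splitting on $\gamma<1$ versus $\gamma\ge1$, and invoking monotonicity in $\alpha$ to identify $\overline\alpha$. The bookkeeping matches the paper's proof exactly.
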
   
\begin{proof}
From Assumption~\ref{ass_Wg} and  Lemmas~\ref{lem_laplacian}-\ref{lem_f(x)},$$||M_0||_\infty\leq 2(1+\gamma).$$
This is because, from Assumption~\ref{ass_Wg} the row sum of the absolute values of matrix~$\overline{W}$ is at most~$2$ and similarly, using Lemma~\ref{lem_f(x)}, the row sum of the absolute values of matrix~$H(\overline{W} \otimes I_m)$ is at most~$2\gamma$. By similar reasoning, 
\[||M||_\infty\leq \max\{2+\gamma(2+\alpha), 2+\alpha\},
\]
\[||\alpha M_1||_\infty\leq \max\{\alpha \gamma, \alpha\}.\]
%	Assuming~$\alpha<1$,
%	\[||M||_\infty\leq 3+\gamma(1+\alpha).
%	\]
Following~\eqref{eq_lambda3}, for~$\gamma<1$,
\[ 4(2(1+\gamma)+\max\{2+\gamma(2+\alpha), 2+\alpha\})^{1-\frac{1}{nm}} \alpha^{\frac{1}{nm}}<\lambda_{\min},
	\]
and for ~$\gamma\geq1$,
\[ 4(4+\gamma(4+\alpha))^{1-\frac{1}{nm}} (\alpha \gamma)^{\frac{1}{nm}}<\lambda_{\min}.
\]		
Since the functions on the left-hand-side of the above inequalities are monotonically increasing for~$\alpha>0$, the largest~$\alpha$ satisfying the above inequalities is given  by~\eqref{eq_alphabar1}-\eqref{eq_alphabar2}.	
\end{proof}	
Lemma~\ref{lem_alphabar} gives a conservative upper-bound on~$\alpha$ which guarantees the rest of the eigenvalues, other than~$\lambda_{1,j}(\alpha)=0$ and~$\lambda_{2,j}(\alpha)<0$, remain in the LHP and Theorem~\ref{thm_zeroeig} is valid.	However, the eigenvalues of~$M$ may still remain in the LHP for a possible less-conservative choice of~$\alpha>\overline{\alpha}$. In general, for a proper~$\alpha$, matrix~$M$ has~$m$ zero eigenvalues associated with the eigenvectors~$V_1$ given in~\eqref{eq_V}, and the \textit{null space} of the \textit{time-varying} matrix~$M$ is,
\begin{equation}
	\mc{N}(M) = \text{span}\{\left(\begin{array}{c}
		\mb{1}_n \\
		\mb{0}_n  
	\end{array} \right)\otimes I_m\},
\end{equation}
which is \textit{independent of time}.

\begin{thm} \label{thm_lyapunov}
Let the conditions in Lemma~\ref{lem_invariant}, Lemma~\ref{lem_alphabar}, and Theorem~\ref{thm_zeroeig} hold.  The proposed dynamics~\eqref{eq_xdot}-\eqref{eq_ydot} converges to~$[\mb{x}^*;\mb{0}_{nm}]$ with~$\mb{x}^*$ as the optimal value of problem~\eqref{eq_prob}. 
\end{thm}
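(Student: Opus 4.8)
The plan is to reduce the convergence question to two decoupled pieces by exploiting that, by Theorem~\ref{thm_zeroeig} and Lemma~\ref{lem_alphabar}, for $0<\alpha<\overline{\alpha}$ the matrix $M(t,\alpha)$ has, at every instant, a \emph{time-invariant} $m$-dimensional kernel $\mathcal N(M)=\operatorname{span}\{(\mathbf 1_n;\mathbf 0_n)\otimes I_m\}$ — which is moreover semisimple, since the $m$ columns of $V_1$ in~\eqref{eq_V} already furnish $m$ independent eigenvectors — while all remaining $2mn-m$ eigenvalues lie in the open left-half plane. Writing $\boldsymbol\xi=[\mathbf x;\mathbf y]$ and letting $\Pi$ be the constant orthogonal projector onto $\mathcal N(M)$, I would set $\mathbf p=\Pi\boldsymbol\xi$ and $\mathbf q=(I-\Pi)\boldsymbol\xi$. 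Since $M\mathbf p=\mathbf 0$, the compact dynamics~\eqref{eq_xydot} splits into $\dot{\mathbf q}=(I-\Pi)M(t,\alpha)\mathbf q$ and $\dot{\mathbf p}=\Pi M(t,\alpha)\mathbf q$; here $\mathbf p$ corresponds to $\mathbf 1_n\otimes\bar{\mathbf x}$ with $\bar{\mathbf x}=\tfrac1n\sum_i\mathbf x_i$ (and a zero $\mathbf y$-block), whereas the \emph{entire} $\mathbf y$, together with the consensus disagreement of $\mathbf x$, sits inside $\mathbf q$.

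The first step is to show $\mathbf q(t)\to\mathbf 0$. The key structural facts are that every WB-digraph Laplacian satisfies $\overline W+\overline W^\top\preceq 0$ and $\overline A+\overline A^\top\preceq 0$ with kernel $\operatorname{span}\{\mathbf 1_n\}$ (Lemma~\ref{lem_laplacian}), and that the smoothed loss has a \emph{globally} bounded Hessian (the quadratic term contributes $2I$ and $L(z,\mu)$ has bounded curvature), so $\gamma$ in Lemma~\ref{lem_alphabar} is a genuine constant and $M(t,\alpha)$ is uniformly bounded along trajectories. I would then construct a quadratic $\mathcal V(\mathbf q)=\mathbf q^\top P\mathbf q$ that is a \emph{common} Lyapunov function for the transversal dynamics over the (finite, uniformly connected) family of admissible topologies: using the negative semidefiniteness above to dominate the consensus part, and treating the $-\alpha I_{mn}$ and $-\alpha H$ couplings as the perturbation $\alpha M_1$ whose size is controlled exactly as in Lemma~\ref{lem_alphabar}, one obtains $\dot{\mathcal V}\le -c\lVert\mathbf q\rVert^2$ for all active $(\overline W,\overline A)$ and all attainable $H$. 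Because the state is continuous across the discrete topology jumps (only $\overline W,\overline A$ switch), this common decrease yields $\mathbf q(t)\to\mathbf 0$ exponentially, uniformly in the switching signal; in particular $\mathbf y(t)\to\mathbf 0_{nm}$ and $\mathbf x_i(t)-\bar{\mathbf x}(t)\to\mathbf 0_m$.

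The second step is to pin down $\lim\mathbf p$, i.e. $\lim\bar{\mathbf x}(t)$. From $\dot{\mathbf p}=\Pi M\mathbf q$ and the exponential decay of $\mathbf q$, $\mathbf p$ (hence $\mathbf x$) stays bounded, which closes the only apparent circularity (namely that $M$ depends on $\mathbf x$ through $H$). Invoking the conservation relation behind Lemma~\ref{lem_invariant} — $\sum_i\dot{\mathbf x}_i=-\alpha\sum_i\mathbf y_i$ with $\sum_i\mathbf y_i=\sum_i\boldsymbol\nabla f_i(\mathbf x_i)$ from~\eqref{eq_sumydot}, so $\dot{\bar{\mathbf x}}=-\tfrac{\alpha}{n}\sum_i\boldsymbol\nabla f_i(\mathbf x_i)$ — and substituting $\mathbf x_i=\bar{\mathbf x}+o(1)$, the average obeys $\dot{\bar{\mathbf x}}=-\tfrac{\alpha}{n}\boldsymbol\nabla F_{\mathrm{agg}}(\bar{\mathbf x})+\boldsymbol\delta(t)$, where $F_{\mathrm{agg}}=\sum_i f_i$ is strictly convex and coercive (Lemma~\ref{lem_f(x)}) and $\boldsymbol\delta(t)\to\mathbf 0$. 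A standard vanishing-perturbation/LaSalle argument for the gradient flow of a strictly convex coercive function then gives $\bar{\mathbf x}(t)\to\bar{\mathbf x}^*$, the unique point with $\sum_i\boldsymbol\nabla f_i(\bar{\mathbf x}^*)=\mathbf 0_m$. Combining the two steps, $[\mathbf x(t);\mathbf y(t)]\to[\mathbf 1_n\otimes\bar{\mathbf x}^*;\mathbf 0_{nm}]=[\mathbf x^*;\mathbf 0_{nm}]$, which by Lemma~\ref{lem_invariant} is the invariant equilibrium and the optimal point of problem~\eqref{eq_prob}.

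The main obstacle I anticipate is exactly the passage from the \emph{frozen-time} spectral picture of Theorem~\ref{thm_zeroeig} to genuine convergence of the \emph{hybrid, nonlinear} system: pointwise Hurwitzness of the transversal block does not by itself guarantee asymptotic stability under arbitrary topology switching and a state-dependent $H$, so the crux is to exhibit the \emph{single} quadratic certificate $\mathbf q^\top P\mathbf q$ that decays under every admissible $(\overline W,\overline A)$ and every attainable Hessian — this is where the weight-balanced structure $\overline W+\overline W^\top\preceq 0$ and the $\overline{\alpha}$-bound of Lemma~\ref{lem_alphabar} must be used quantitatively, and where the non-normality and the cross-coupling $H(\overline W\otimes I_m)$ make a naive identity-based Lyapunov function insufficient. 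A secondary subtlety is treating the consensus error as a vanishing (rather than identically zero) perturbation of the aggregate gradient flow, so that it still drives $\bar{\mathbf x}$ to the exact optimizer and not merely to a neighborhood of it.
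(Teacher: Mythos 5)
Your architecture is genuinely different from the paper's, and it is worth saying how. The paper does not decompose the state into kernel and transversal components: it works directly with the full error $\boldsymbol\delta=[\mathbf x;\mathbf y]-[\mathbf x^*;\mathbf 0_{nm}]$, takes the unweighted quadratic $V=\tfrac12\boldsymbol\delta^\top\boldsymbol\delta$, writes $\dot V=\boldsymbol\delta^\top M\boldsymbol\delta$, and concludes via the eigenvalue bound~\eqref{eq_Re2} and LaSalle. That is, the paper uses exactly the identity-based certificate that you argue is insufficient for a non-normal $M$; your two-stage route (transversal contraction first, then a vanishing-perturbation argument for the averaged flow $\dot{\bar{\mathbf x}}=-\tfrac{\alpha}{n}\sum_i\boldsymbol\nabla f_i(\mathbf x_i)$) is the more conservative way to handle the non-normality, the state dependence of $H$, and the topology switching, none of which the paper's one-line Lyapunov computation engages with explicitly.

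The difficulty is that your proposal leaves its own central step unproved. Everything hinges on exhibiting a single positive definite $P$ with $\mathbf q^\top\bigl(P(I-\Pi)M+M^\top(I-\Pi)^\top P\bigr)\mathbf q\le-c\lVert\mathbf q\rVert^2$ uniformly over all admissible $(\overline W,\overline A)$ and all attainable Hessians $H$, and you only assert that such a $P$ can be constructed from $\overline W+\overline W^\top\preceq 0$ and the $\overline\alpha$ bound of Lemma~\ref{lem_alphabar}. Pointwise Hurwitzness of the transversal block --- which is all that Theorem~\ref{thm_zeroeig} and Lemma~\ref{lem_alphabar} deliver --- does not imply the existence of a \emph{common} quadratic Lyapunov function for a switched family, and the cross-coupling block $H(\overline W\otimes I_m)$ is not tamed by the semidefiniteness of the Laplacians alone; without an explicit $P$, or at least a compactness and uniformity argument producing one, the claim $\mathbf q(t)\to\mathbf 0$ under arbitrary switching is not established. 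You correctly flag this as ``the main obstacle,'' but flagging it is not resolving it. A secondary gap sits in your second stage: running LaSalle for the perturbed gradient flow of $\sum_i f_i$ requires coercivity, and the logistic smoothing $\tfrac1\mu\log(1+e^{\mu z})$ is flat as $z\to-\infty$ while the quadratic term $\boldsymbol\omega_i^\top\boldsymbol\omega_i$ does not involve $\nu_i$, so coercivity in the $\nu$ direction needs a justification beyond citing Lemma~\ref{lem_f(x)} (it holds only if the global dataset contains both labels). As it stands, the proposal is a well-motivated plan rather than a proof.
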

\begin{proof}
Define the following proper positive-definite Lyapunov function proposed in~\cite{mesbahi2010graph},
\begin{eqnarray}
	V(\boldsymbol{\delta}) = \frac{1}{2} \boldsymbol{\delta}^\top \boldsymbol{\delta} =  \frac{1}{2}\lVert\boldsymbol{\delta}\rVert_2^2
\end{eqnarray}
with~$\delta \in \mathbb{R}^{2mn}$ defined as the difference of system state and the optimal state,
\begin{eqnarray}
	\boldsymbol{\delta} = \left(\begin{array}{c} {\mb{x}} \\ {\mb{y}} \end{array} \right)-\left(\begin{array}{c} {\mb{x}^*} \\ {\mb{0}_{mn}} \end{array} \right).
\end{eqnarray}
Since from Lemma~\ref{lem_invariant}~$[\mb{x}^*;\mb{0}_{nm}]$ is an invariant state of the networked dynamics~\eqref{eq_xydot}-\eqref{eq_M}, we have
$\dot{\boldsymbol{\delta}} = M \boldsymbol{\delta}$.
Then, the time-derivative of the proposed Lyapunov function is as follows,
\[\dot{V} = \boldsymbol{\delta}^\top \dot{\boldsymbol{\delta}}=  \boldsymbol{\delta}^\top M \boldsymbol{\delta},
%= 
%\boldsymbol{\delta}^\top M^\top \boldsymbol{\delta}=
%\boldsymbol{\delta}^\top \frac{M+M^\top}{2} \boldsymbol{\delta}
\]
Following Theorem~\ref{thm_zeroeig}, let~${\lambda}_{1,j}(\alpha)=0$,~$Re\{{\lambda}_{i,j}(\alpha)\}<0$ for~$2\leq i\leq2n,1\leq j\leq m$ represent the real-parts of the eigenspectrum of~$M$. 
%Let~$\widehat{M}=\frac{M+M^\top}{2}$ represent the symmetric part of matrix~$M$. Following Theorem~\ref{thm_zeroeig} let~$\lambda_1=0$,~$Re\{\lambda_i\}<0$ for~$2\leq i\leq2n$ represent the spectrum of matrix~$M$. It can be proved that since matrix~$\widehat{M}$ is symmetric all its eigenvalues are real  ~$\widehat{\lambda}_1=0$,~$\widehat{\lambda_i}<0$ for~$2\leq i\leq2n$~\cite{hornjohnson}. 
It is known that~\cite{SensNets:Olfati04},
\begin{eqnarray} \label{eq_Re2}
	\boldsymbol{\delta}^\top M \boldsymbol{\delta} \leq \max_{1\leq j\leq m}\operatorname{Re}\{{\lambda}_{2,j}(\alpha)\} \boldsymbol{\delta}^\top  \boldsymbol{\delta}. 
\end{eqnarray}
Since~$M$ varies in time,~$\max_{1\leq j\leq m}\operatorname{Re}\{{\lambda}_{2,j}(\alpha)\}$ also changes in time. However, from Theorem~\ref{thm_zeroeig}, it is always negative, implying that~$\dot{V}< 0$ for~$\delta \neq \mb{0}_{2mn}$. 
%Recall that for~$\delta = \mb{0}_{2mn}$ we have~$\sum_{i=1}^n \boldsymbol{ \nabla}^2 f_i(\mb{x}_i) = \mb{0}_{m\times m}~$ and, following~\eqref{eq_sum_df},~${\lambda}_{2,j}(\alpha)=0$. 
We have,
\[\dot{V} =  0 \Leftrightarrow \boldsymbol{\delta} = \mb{0}_{2mn},
\] 
and, from LaSalle’s invariance principle, convergence to the invariant set~$\{\boldsymbol{\delta} = \mb{0}_{2mn}\}$ follows (see \cite{mesbahi2010graph} Section 4.1).
\end{proof}

\begin{rem}
 	Following~\eqref{eq_Re2}, the convergence rate of the dynamics~\eqref{eq_xydot}-\eqref{eq_M} depends on~$\operatorname{Re}\{{\lambda}_{2,j}(\alpha)\}$. Note that~$\operatorname{Re}\{{\lambda}_{2,j}(\alpha)\}$ is tightly related to the parameter~$\alpha$ and therefore, to improve the convergence rate parameter~$\alpha$ needs not to be very small. 
\end{rem}

\section{Simulation: Nonlinear SVM Example} \label{sec_sim}
For simulation consider the academic example given in~\cite{russell2010artificial} (page~$747$). Consider~$N=60$ uniformly distributed sample data points, shown in Fig.~\ref{fig_data}(Left), represented in two classes:  blue `*'s and  red `o's. Clearly, these points~$\boldsymbol{\chi}_i=[{\chi}_i(1);{\chi}_i(2)]$ are not linearly separable in~$\mathbb{R}^2$. The nonlinear mapping~$\phi(\boldsymbol{\chi}_i) = [{\chi}_i(1)^2;{\chi}_i(2)^2;\sqrt{2}{\chi}_i(1){\chi}_i(2)]$, proposed by~\cite{russell2010artificial}, properly maps the data points to~$\mathbb{R}^3$ such that the projected data points are linearly separable via a hyperplane as shown in Fig.~\ref{fig_data}(Right). It can be shown that  the associated kernel function is
$K(\boldsymbol{\chi}_i,\boldsymbol{\chi}_j)=(\phi(\boldsymbol{\chi}_i)^\top \phi(\boldsymbol{\chi}_j))^2$. 
\begin{figure}[]
	\centering
	\includegraphics[width=1.4in]{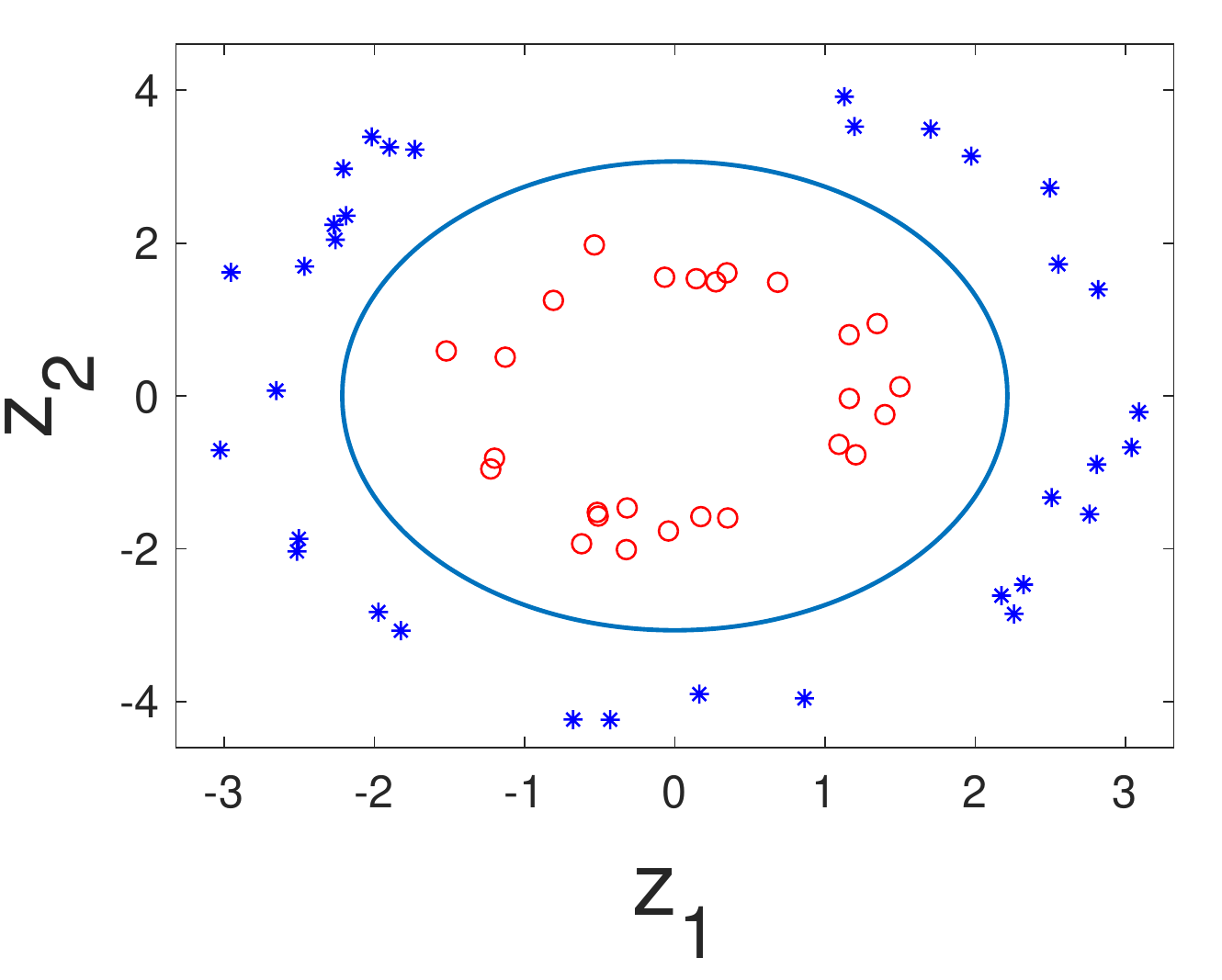}
	\hspace{0.2cm}
	\includegraphics[width=1.6in]{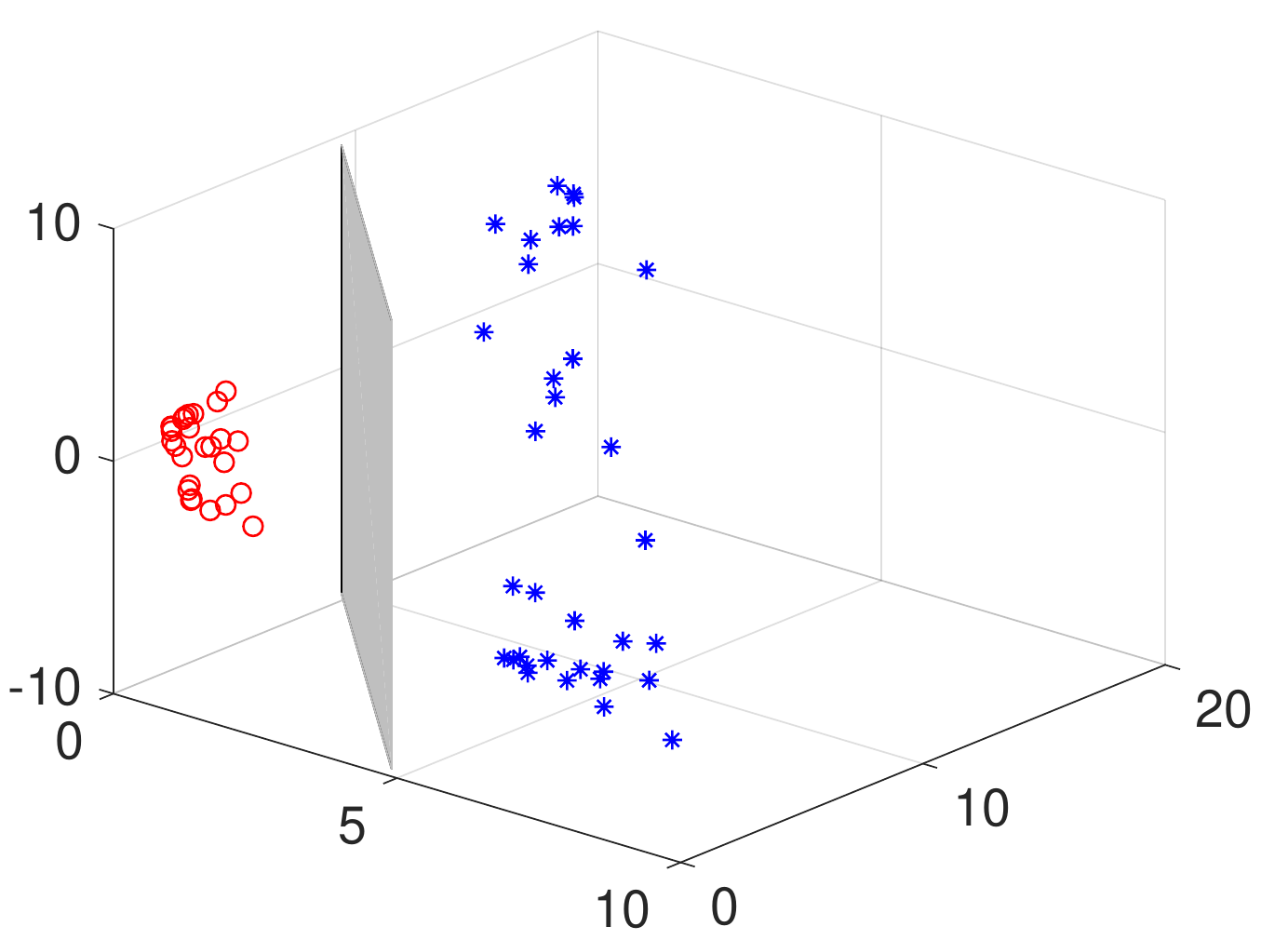}	
	\caption{(Left) Training data and the optimal nonlinear classifier (the ellipse) in 2D. (Right) The same points mapped into 3D space via a nonlinear mapping. Linear SVM optimally classifies the data points via the gray hyperplane which represents the  ellipse in the left figure by inverse mapping.     }
	\label{fig_data}
\end{figure}
We evaluate the proposed dynamics~\eqref{eq_xydot}-\eqref{eq_M} (with $\alpha=10$) for D-SVM over a network of~$n=5$ agents each having access to~$50\%$ random selection of the data points. 
The loss function~$f_i(\mb{x}_i)$ follows the smooth approximation discussed in Section~\ref{sec_prob} with~$\mu=3$ and~$C=1.5$. Every agent finds the optimal separating hyperplane defined by~$\mb{x}_i = [\boldsymbol{\omega}_i^\top;\nu_i]$ ($\boldsymbol{\omega}_i \in \mathbb{R}^3$) and shares this value along with the auxiliary variable~$\mb{y}_i$ with its direct neighbors in $\mc{G}$. The agents' network $\mc{G}$ is considered as the union of a directed cycle and a $2$-hop digraph
(see examples in~\cite{xin2020general}). To satisfy the weight-balanced condition the link weights in each network are equal and randomly chosen in the range~$(0,0.5)$. Using MATLAB's \texttt{randperm} function, we randomly change the permutation of the nodes in the network and the link weights every~$0.05$ seconds to simulate  a dynamic network in DT domain. The time-evolution of~$\mb{x}_i  = [\boldsymbol{\omega}_i^\top;\nu_i] \in \mathbb{R}^4$, loss function~$F(\mb{x})$, and sum of the gradients~$\sum_{i=1}^{5} \boldsymbol{ \nabla} f_i(\mb{x}_i) \in \mathbb{R}^4$ are shown in  Fig.~\ref{fig_dynamics}. 
\begin{figure}
	\centering
	\includegraphics[width=3.5in]{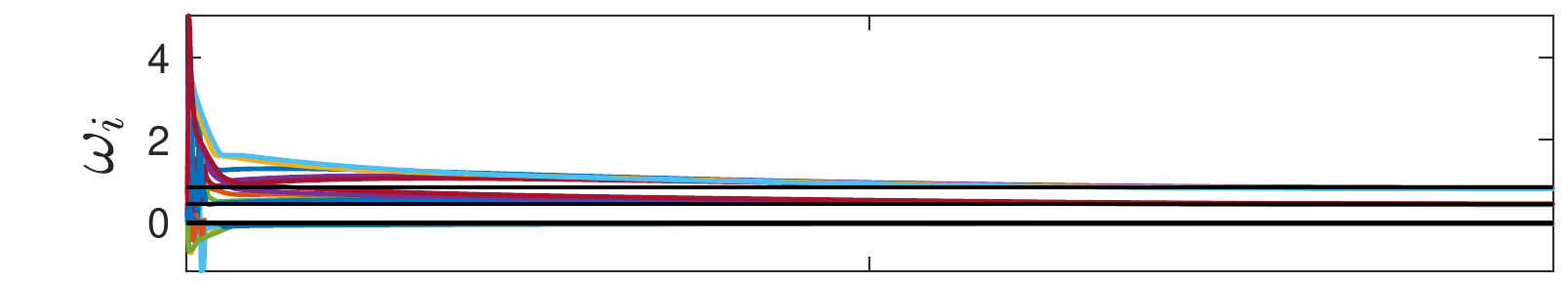}
	\includegraphics[width=3.51in]{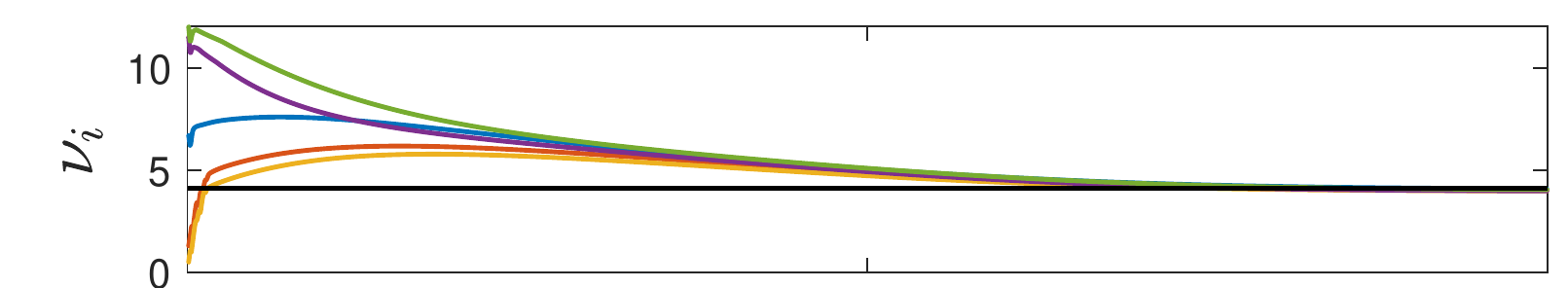}	
	\includegraphics[width=3.5in]{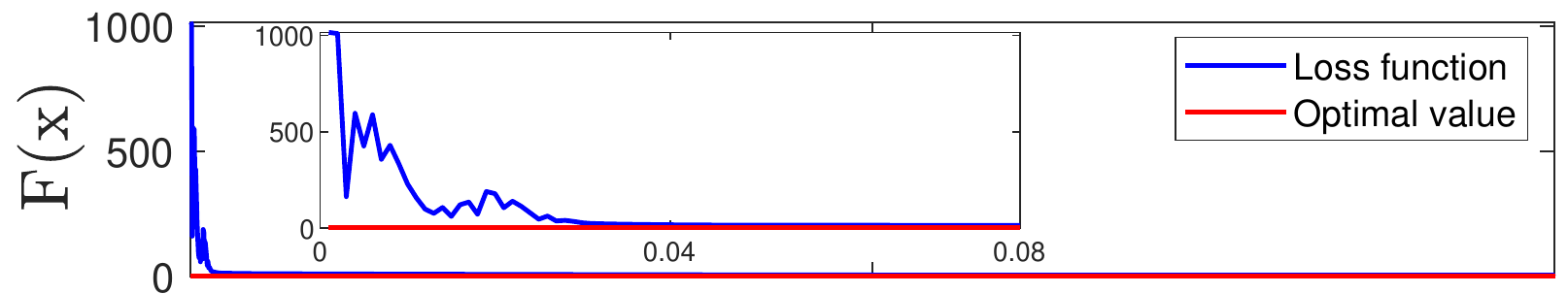}
	\includegraphics[width=3.5in]{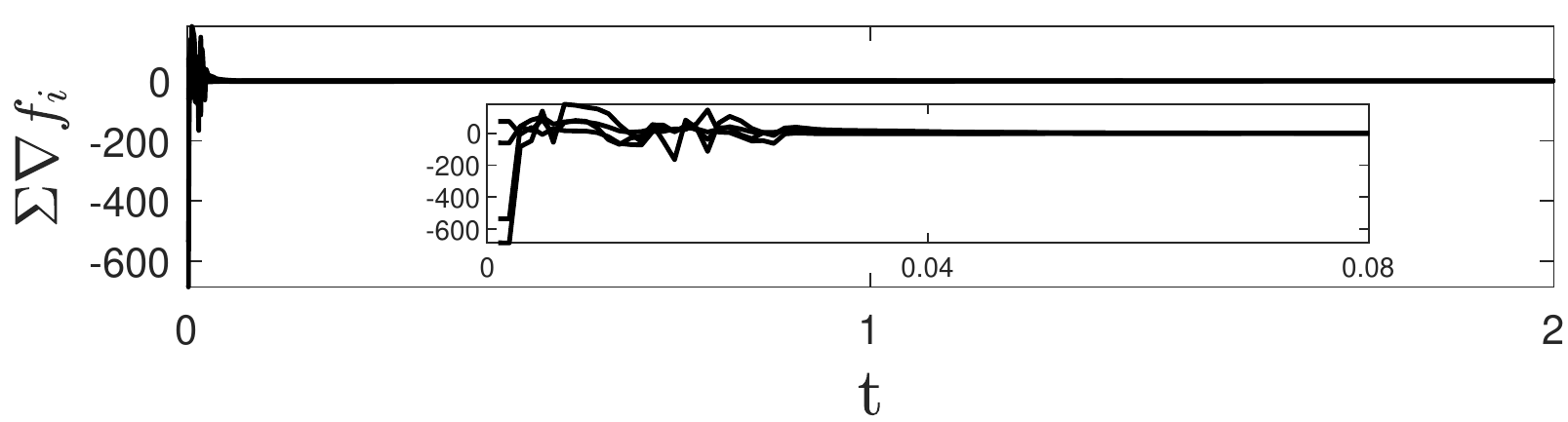}		
	\caption{The time-evolution of the SVM classifier parameters ~${\omega}_i$ and~$\nu_i$ (at all $5$ agents) under dynamics~\eqref{eq_xydot}-\eqref{eq_M}  along with overall loss function~$F(\mb{x})$ and sum of the gradients~$\sum_{i=1}^{5} \boldsymbol{ \nabla} f_i(\mb{x}_i)$. The optimal values based on the centralized SVM are also shown for comparison.}
	\label{fig_dynamics}	
\end{figure}
The agents reach consensus on the optimal value ${\overline{\mb{x}}^*=[\overline{\omega}(1),\overline{\omega}(2),\overline{\omega}(3),\overline{\nu}]^\top}$ as the parameters of the separating hyperplane in~$\mathbb{R}^3$. Via the inverse mapping, the hyperplane in~$\mathbb{R}^3$ represents an ellipse formulated as ${\overline{\omega}(1)z_1^2 + \overline{\omega}(2)z_2^2 -\overline{\nu}=0}$ ($z_1$ and~$z_2$ as the Cartesian coordinates in~$\mathbb{R}^2$), which separates the original data points~$\boldsymbol{\chi}_i$'s in~$\mathbb{R}^2$. The calculated separating ellipses by  all~$5$ agents are shown in Fig.~\ref{fig_2d_k} at two different time-instants.  
 \begin{figure}[]
 	\centering
 	\includegraphics[width=1.5in]{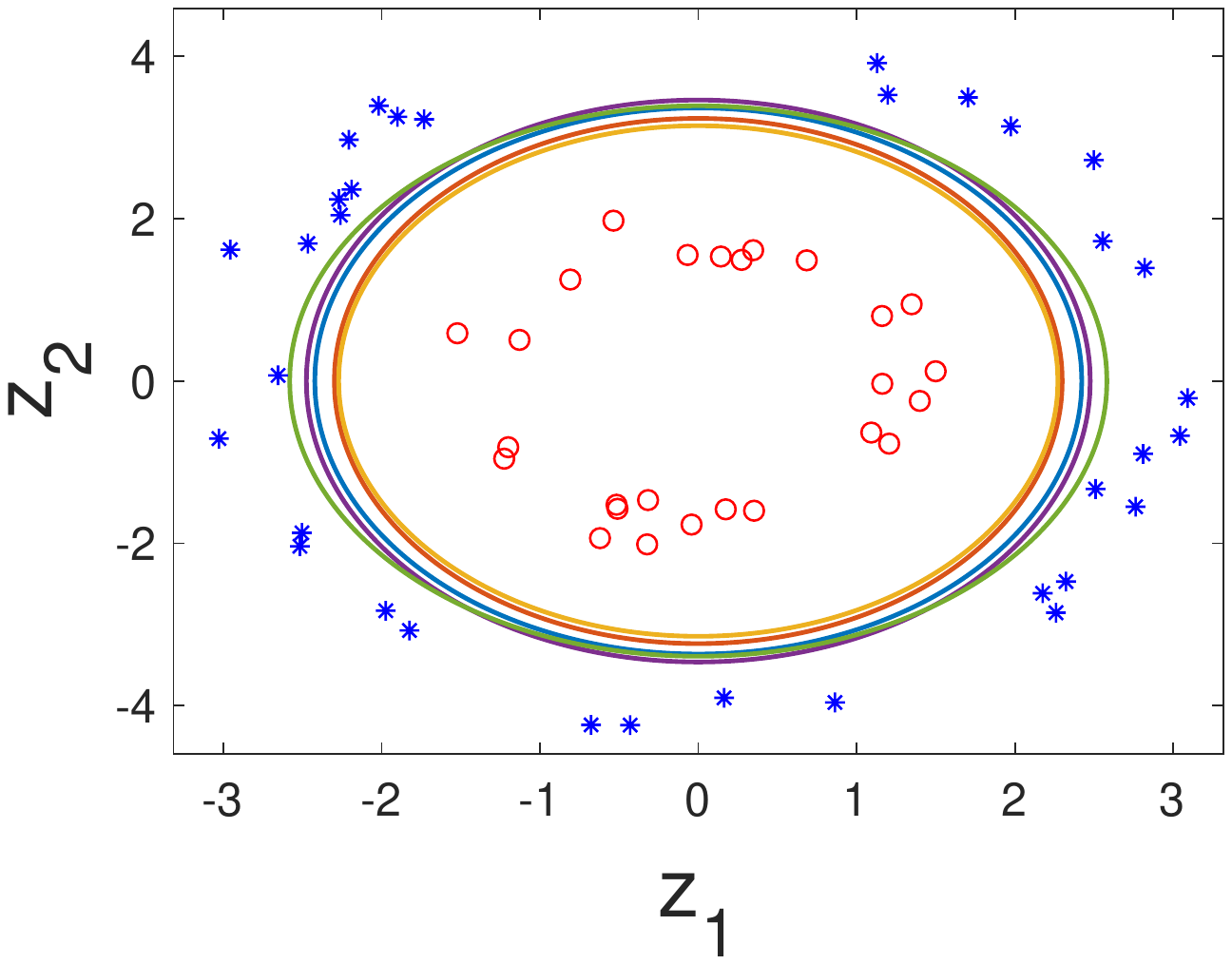}
 	\hspace{0.5cm}
 	\includegraphics[width=1.5in]{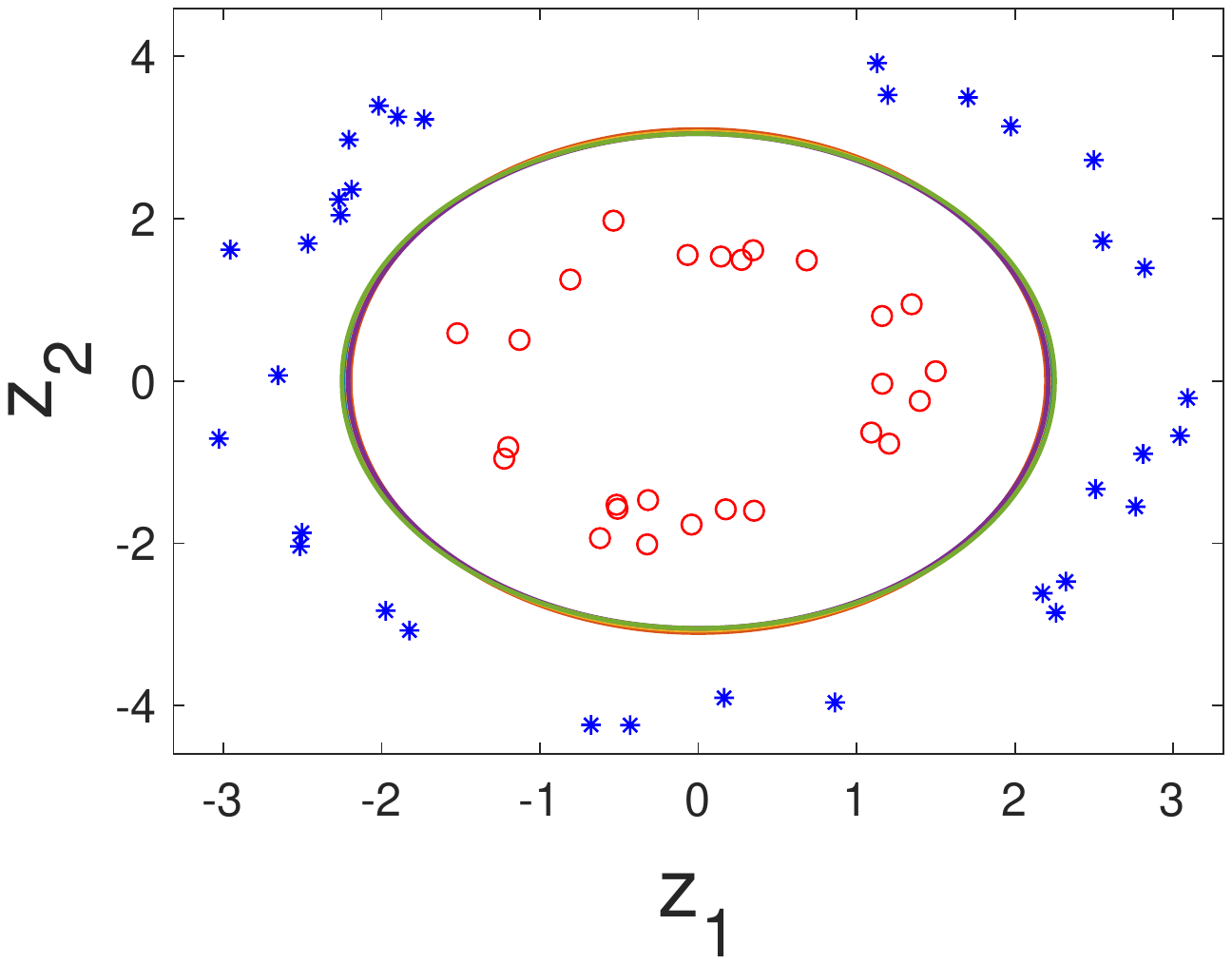}	
 	\caption{Optimal elliptical SVM classifiers computed by all the agents under the distributed optimization  dynamics~\eqref{eq_xydot}-\eqref{eq_M}  at time~$t=0.1$ (Left) and~${t=2}$ (Right).}
 	\label{fig_2d_k}
 \end{figure}
\section{Conclusion and Future Research} \label{sec_conclusion}
In this work, a Lipschitz dynamics is proposed to solve D-SVM over a dynamic WB-digraph in a hybrid setting. We adopt matrix perturbation analysis to prove convergence of the CT dynamics~\eqref{eq_xydot}-\eqref{eq_M} whose parameters vary due to  switching network topology in DT domain. In particular, our proposed distributed optimization in D-SVM setup enables the agents to cooperatively learn the classifier over a dynamic network via local information, improving classical D-SVM methods in terms of data privacy~\cite{navia2006distributed,chang2011psvm,bordes2005fast} and computational complexity~\cite{forero2010consensus}.

As future research direction, one can extend the results to the DT counterpart. 
%For example, using well-known explicit (Euler-Forward) or implicit (Euler-Backward and Tustin) methods, one can find DT approximation of the proposed CT dynamics~\eqref{eq_xydot}-\eqref{eq_M}. 
For example, for Euler-Forward method, the DT version of matrix $M$ in \eqref{eq_M} is $M_d = I+TM$ with~$T$ as the sampling period. Then, the explicit upper bound on~$T$ such that a stable CT dynamics from Theorem~\ref{thm_zeroeig}-\ref{thm_lyapunov} remains stable after discretization can be defined. Additionally, extensions to \textit{time-delayed} networks, \textit{online} D-SVM, and \textit{sparse} digraphs are directions of interest.

%Using well-known explicit (Euler-Forward) or implicit (Euler-Backward and Tustin) methods, one can find DT approximation of the proposed CT dynamics~\eqref{eq_xydot}-\eqref{eq_M}. For example, for Euler-Forward method, the DT version of matrix $M$ in \eqref{eq_M} is $M_d = I+TM$ with~$T$ as the sampling period. Then, the explicit upper bound on~$T$ such that an stable CT dynamics from Theorem~\ref{thm_zeroeig}-\ref{thm_lyapunov} remains stable after discretization can be defined from the results in~\cite{axelsson2014discrete}.
%As future research direction, one can extend the results to \textit{time-delayed} networks~\cite{hadjicostis2013average}, \textit{online} D-SVM~\cite{giannakis2020time}, and \textit{sparse} digraphs~\cite{TCNS_resource}.
%As future research direction, one can extend the results to  time-delayed networks based on~\cite{hadjicostis2013average}. Further, by leveraging the hybrid nature of our proposed solution, it can be extended to online D-SVM~\cite{giannakis2020time} where the training sets vary in time resulting in a dynamic loss function. 
%Currently, we are working to relax our network-connectivity requirement to sparsely-connected digraphs as in~\cite{TCNS_resource}.
%can be adopted for distributed optimization  such that the union of the directed networks over some bounded non-overlapping time-intervals is weight-balanced.

\bibliographystyle{IEEEbib}
\bibliography{bibliography}

\begin{thebibliography}{10}

\bibitem{xin2020decentralized}
R.~Xin, S.~Kar, and U.~A. Khan,
\newblock ``Decentralized stochastic optimization and machine learning: A
  unified variance-reduction framework for robust performance and fast
  convergence,''
\newblock {\em IEEE Signal Processing Magazine}, vol. 37, no. 3, pp. 102--113,
  2020.

\bibitem{xin2020general}
R.~Xin, S.~Pu, A.~Nedi{\'c}, and U.~A. Khan,
\newblock ``A general framework for decentralized optimization with first-order
  methods,''
\newblock {\em Proceedings of the IEEE}, vol. 108, no. 11, pp. 1869--1889,
  2020.

\bibitem{giannakis2020time}
A.~Simonetto, E.~Dall'Anese, S.~Paternain, G.~Leus, and G.~B. Giannakis,
\newblock ``Time-varying convex optimization: Time-structured algorithms and
  applications,''
\newblock {\em Proceedings of the IEEE}, vol. 108, no. 11, pp. 2032--2048,
  2020.

\bibitem{TCNS_resource}
M.~Doostmohammadian, A.~Aghasi, and T.~Charalambous,
\newblock ``Fast-convergent dynamics for distributed resource allocation over
  sparse time-varying networks,''
\newblock {\em arXiv preprint arXiv:2012.08181}, 2020.

\bibitem{lu2008distributed}
Y.~Lu, V.~Roychowdhury, and L.~Vandenberghe,
\newblock ``Distributed parallel support vector machines in strongly connected
  networks,''
\newblock {\em IEEE Transactions on Neural Networks}, vol. 19, no. 7, pp.
  1167--1178, 2008.

\bibitem{chang2011psvm}
E.~Y. Chang, K.~Zhu, H.~Wang, and H.~Bai,
\newblock ``Psvm: Parallelizing support vector machines on distributed
  computers,''
\newblock in {\em Foundations of Large-Scale Multimedia Information Management
  and Retrieval}, pp. 213--230. Springer, 2011.

\bibitem{navia2006distributed}
A.~Navia-V{\'a}zquez, D.~Gutierrez-Gonzalez, E.~Parrado-Hern{\'a}ndez, and
  J.~J. Navarro-Abellan,
\newblock ``Distributed support vector machines,''
\newblock {\em IEEE Transactions on Neural Networks}, vol. 17, no. 4, pp. 1091,
  2006.

\bibitem{bordes2005fast}
A.~Bordes, S.~Ertekin, J.~Weston, L.~Botton, and N.~Cristianini,
\newblock ``Fast kernel classifiers with online and active learning,''
\newblock {\em Journal of Machine Learning Research}, vol. 6, no. 9, 2005.

\bibitem{forero2010consensus}
P.~A. Forero, A.~Cano, and G.~B. Giannakis,
\newblock ``Consensus-based distributed support vector machines.,''
\newblock {\em Journal of Machine Learning Research}, vol. 11, no. 5, 2010.

\bibitem{gharesifard2013distributed}
B.~Gharesifard and J.~Cort{\'e}s,
\newblock ``Distributed continuous-time convex optimization on weight-balanced
  digraphs,''
\newblock {\em IEEE Transactions on Automatic Control}, vol. 59, no. 3, pp.
  781--786, 2014.

\bibitem{ning2017distributed}
B.~Ning, Q.~Han, and Z.~Zuo,
\newblock ``Distributed optimization for multiagent systems: An edge-based
  fixed-time consensus approach,''
\newblock {\em IEEE Transactions on Cybernetics}, vol. 49, no. 1, pp. 122--132,
  2017.

\bibitem{garg2019fixed2}
K.~Garg, M.~Baranwal, A.~O. Hero, and D.~Panagou,
\newblock ``Fixed-time distributed optimization under time-varying
  communication topology,''
\newblock {\em arXiv preprint arXiv:1905.10472}, 2019.

\bibitem{rahili_ren}
S.~{Rahili} and W.~{Ren},
\newblock ``Distributed continuous-time convex optimization with time-varying
  cost functions,''
\newblock {\em IEEE Transactions on Automatic Control}, vol. 62, no. 4, pp.
  1590--1605, 2017.

\bibitem{taes2020finite}
M.~Doostmohammadian,
\newblock ``Single-bit consensus with finite-time convergence: Theory and
  applications,''
\newblock {\em IEEE Transactions on Aerospace and Electronic Systems}, vol. 56,
  no. 4, pp. 3332--3338, 2020.

\bibitem{li2020time}
Z.~Li and Z.~Ding,
\newblock ``Time-varying multi-objective optimisation over switching graphs via
  fixed-time consensus algorithms,''
\newblock {\em International Journal of Systems Science}, vol. 51, no. 15, pp.
  2793--2806, 2020.

\bibitem{armand2017globally}
P.~Armand and R.~Omheni,
\newblock ``A globally and quadratically convergent primal--dual augmented
  lagrangian algorithm for equality constrained optimization,''
\newblock {\em Optimization Methods and Software}, vol. 32, no. 1, pp. 1--21,
  2017.

\bibitem{srivastava2018distributed}
P.~Srivastava and J.~Cort{\'e}s,
\newblock ``Distributed algorithm via continuously differentiable exact penalty
  method for network optimization,''
\newblock in {\em IEEE Conference on Decision and Control (CDC)}. IEEE, 2018,
  pp. 975--980.

\bibitem{mansoori2019fast}
F.~Mansoori and E.~Wei,
\newblock ``A fast distributed asynchronous newton-based optimization
  algorithm,''
\newblock {\em IEEE Transactions on Automatic Control}, vol. 65, no. 7, pp.
  2769--2784, 2019.

\bibitem{yuan2016convergence}
K.~Yuan, Q.~Ling, and W.~Yin,
\newblock ``On the convergence of decentralized gradient descent,''
\newblock {\em SIAM Journal on Optimization}, vol. 26, no. 3, pp. 1835--1854,
  2016.

\bibitem{goebel2009hybrid}
R.~Goebel, R.~G. Sanfelice, and A.~R. Teel,
\newblock ``Hybrid dynamical systems,''
\newblock {\em IEEE control systems magazine}, vol. 29, no. 2, pp. 28--93,
  2009.

\bibitem{ly2012learning}
D.~L. Ly and H.~Lipson,
\newblock ``Learning symbolic representations of hybrid dynamical systems,''
\newblock {\em The Journal of Machine Learning Research}, vol. 13, no. 1, pp.
  3585--3618, 2012.

\bibitem{stewart1990matrix}
G.~W. Stewart and J.~Sun,
\newblock ``Matrix perturbation theory,''
\newblock 1990.

\bibitem{SensNets:Olfati04}
R.~Olfati-Saber and R.~M. Murray,
\newblock ``Consensus problems in networks of agents with switching topology
  and time-delays,''
\newblock {\em IEEE Transactions on Automatic Control}, vol. 49, no. 9, pp.
  1520--1533, Sept. 2004.

\bibitem{chapelle2007training}
O.~Chapelle,
\newblock ``Training a support vector machine in the primal,''
\newblock {\em Neural computation}, vol. 19, no. 5, pp. 1155--1178, 2007.

\bibitem{slp_book}
D.~Jurafsky and J.~H. Martin,
\newblock {\em Speech and Language Processing},
\newblock Prentice Hall, 2020.

\bibitem{seyranian2003multiparameter}
A.~P. Seyranian and A.~A. Mailybaev,
\newblock {\em Multiparameter stability theory with mechanical applications},
  vol.~13,
\newblock World Scientific, 2003.

\bibitem{cai2012average}
K.~Cai and H.~Ishii,
\newblock ``Average consensus on general strongly connected digraphs,''
\newblock {\em Automatica}, vol. 48, no. 11, pp. 2750--2761, 2012.

\bibitem{bhatia2013matrix}
R.~Bhatia,
\newblock {\em Matrix analysis},
\newblock Springer Science \& Business Media, 2013.

\bibitem{mesbahi2010graph}
M.~Mesbahi and M.~Egerstedt,
\newblock {\em Graph theoretic methods in multiagent networks}, vol.~33,
\newblock Princeton University Press, 2010.

\bibitem{russell2010artificial}
S.~Russell and P.~Norvig,
\newblock ``Artificial intelligence: a modern approach,''
\newblock 2010.

\end{thebibliography}
\end{document}